\begin{document}
\title{Multi-product Influence Maximization in Billboard Advertisement}

\author{Dildar Ali}
\affiliation{%
 \institution{Department of Computer Science and Engineering, Indian Institute of Technology Jammu}
 \city{Jammu}
 \state{Jammu \& Kashmir}
 \country{India}
 \postcode{181221}}
 \email{2021rcs2009@iitjammu.ac.in}

 \author{Rajibul Islam}
\affiliation{%
 \institution{Department of Computer Science and Engineering, Gandhi Institute for Technological Advancement}
 \city{Bhubaneshwar}
 \state{Odisha}
 \country{India}
 \postcode{752054}}
 \email{rajibulislam1604@gmail.com}

\author{Suman Banerjee}
\affiliation{%
 \institution{Department of Computer Science and Engineering, Indian Institute of Technology Jammu}
 \streetaddress{1 Th{\o}rv{\"a}ld Circle}
 \city{Jammu}
  \state{Jammu \& Kashmir}
 \country{India}}
\email{suman.banerjee@iitjammu.ac.in}

\begin{abstract}
Billboard Advertisement has emerged as an effective out-of-home advertisement technique where the goal is to select a limited number of slots and play advertisement content over there with the hope that this will be observed by many people, and effectively, a significant number of them will be influenced towards the brand. Given a trajectory and a billboard database and a positive integer $k$, how can we select $k$ highly influential slots to maximize influence? In this paper, we study a variant of this problem where a commercial house wants to make a promotion of multiple products, and there is an influence demand for each product. We have studied two variants of the problem. In the first variant, our goal is to select $k$ slots such that the respective influence demand of each product is satisfied. In the other variant of the problem, we are given with $\ell$ integers $k_1,k_2, \ldots, k_{\ell}$, the goal here is to search for $\ell$ many set of slots $S_1, S_2, \ldots, S_{\ell}$ such that for all $i \in [\ell]$, $|S_{i}| \leq k_i$ and for all $i \neq j$, $S_i \cap S_j=\emptyset$ and the influence demand of each of the products gets satisfied. We model the first variant of the problem as a multi-submodular cover problem and the second variant as its generalization. For solving the first variant, we adopt the bi-criteria approximation algorithm, and for the other variant, we propose a sampling-based approximation algorithm. Extensive experiments with real-world trajectory and billboard datasets highlight the effectiveness and efficiency of the proposed solution approach.
\end{abstract}

\begin{CCSXML}
<ccs2012>
   <concept>
       <concept_id>10010520.10010521.10010537.10003100</concept_id>
       <concept_desc>Computer systems organization~Databases and Big Data Management</concept_desc>
       <concept_significance>500</concept_significance>
       </concept>
 </ccs2012>
\end{CCSXML}

\ccsdesc[500]{Information Systems~Computational Advertising}
\keywords{Billboard Advertisement, Influence Provider, Advertiser, Regret Minimization}
\maketitle

\section{Introduction}
Creating and maximizing influence is an effective marketing strategy, and most commercial houses spend $7-10\%$ of total revenue on advertisements. Among many, one of the effective approaches is billboard advertisement because of its guaranteed return on investment, ease of use, etc. In this advertising technique, a limited number of influential billboard slots are used in the hope that if some attractive advertisement content is displayed, it will create an influence among people who observe the advertisement content. The key problem that arises in this context is how to select a limited number of influential slots so that the influence is maximized. This problem has been referred to as the Influential Billboard Slot Selection Problem. There exist several studies on this problem. 

\paragraph{\textbf{Our Observation.}}
Existing studies in billboard advertising scenarios share a common objective: (a) helping advertisers achieve maximum influence under budget constraints in a single or multi-advertiser setting \cite{zhang2020towards,ali2024effective,kempe2003maximizing}, and (b) minimizing the regret of an influence provider by effective allocation of slots to advertisers \cite{ali2023efficient,ali2024regret,ali2024minimizing,zhang2021minimizing}. It is standard practice that multiple advertisers submit a campaign proposal to the influence provider, specifying the influence demand and the corresponding payment. A more challenging, yet unexplored, problem is how to allocate slots across multiple advertisers having multiple products under a given budget to maximize total influence from the influence provider's perspective.
\paragraph{\textbf{Motivation.}}
Advertisers promote multiple heterogeneous products, each targeting a different customer base. Billboard advertisements typically optimize slot selection to maximize the individual influence of the advertiser or allocate slots to minimize the influence provider’s regret. However, there is a practical need for commercial houses to satisfy distinct influence demands across multiple products simultaneously. This requires a new formulation to allocate billboard slots within a unified budget while maximizing overall influence across all products.

\paragraph{\textbf{Our Problem.}}
We introduce the Multi-Product Influential Billboard Slot Selection Problem in two variants: (1). Common Slot Selection, in which we select a fixed set of slots (within budget) that jointly satisfy the influence demands of all products. (2). Disjoint Slot Selection, where we select disjoint sets of slots, one for each product, ensuring that the individual influence demand of every product is met and the total cost is minimized.

\paragraph{\textbf{Hardness.}}
Both problem variants are computationally hard, generalizing the NP-hard Multi-Submodular Cover Problem \cite{chekuri2022algorithms}. The disjoint variant further extends this to the Generalized Multi-Submodular Cover Problem by adding non-overlapping slot constraints and separate budgets per product, increasing complexity.

\paragraph{\textbf{Our Solutions.}}
We proposed two different solution approaches to solve the slot selection problem. For the Common Slot Selection Problem, we develop a continuous greedy-based bi-criteria approximation algorithm that leverages the submodular structure of the influence function to yield a provable approximation guarantee. For the Disjoint Slot Selection Problem, we design a sampling-based randomized algorithm. This algorithm randomly samples permutations of slot-product assignments and selects the one with minimal cost while satisfying all influence constraints. We perform a sample complexity analysis using Hoeffding’s inequality \cite{hoeffding1963probability}, establishing bounds on the number of samples needed to achieve near-optimal performance with high probability.


\paragraph{\textbf{General Applicability.}}
The proposed framework and algorithms are applicable to any multi-product, out-of-home advertisement setting. This includes billboard networks, public displays, transit advertising, and digital signage, where multiple product campaigns must be scheduled concurrently under budget constraints.

\paragraph{\textbf{Relevant Studies}} 
Existing studies on computational problems in the context of billboard advertisement can be classified into two categories: Influential Slot Selection and Regret Minimization. In the influential slot selection problem, we are given the trajectory information (location of a group of people across different timestamps), billboard slot information (location, duration, and cost), and a limited budget. The task here is to select a limited number of influential billboard slots to maximize the influence.  Existing literature includes a pruned submodularity graph-based approach \cite{ali2022influential,Jali2025influential}, a branch and bound-based framework \cite{ali2024influentialz}, greedy-based solutions \cite{ali2025influential}, and a Co-operative Co-Evolutionary approach proposed by Wang et al. \cite{wang2022data}, etc. On the other direction of studies, it is assumed that there exists an influence provider who has access to a number of billboards and a number of advertiser approaches for an influence demand in exchange for some budget. Now, the task of the influence provider is to allocate slots to the advertisers in such a way that minimizes their total regret. There exists some study on this problem, which includes some local search-based solution approaches by Zhang et al. \cite{zhang2021minimizing}, regret minimization under zonal influence constraint by Ali et al. \cite{ali2024toward,ali2024minimizing,ali2024regret}. However, to our knowledge, there is no literature on the slot selection problem that considers the existence of multiple products that advertisers advertise.
\paragraph{\textbf{Our Contributions.}} 
In practice, a commercial house often advertises multiple diverse products, each targeting different audiences. Thus, billboard slot selection must account for multi-product promotion needs. To the best of our knowledge, there does not exist any study that addresses this problem. To bridge this gap, in this paper, we study the Multi-Product Influential Billboard Slot Selection Problem in two variants. In the first variant, it asks to choose a subset of $k$ billboard slots such that the influence demand of each product gets satisfied and the cost of selection gets minimized. Here, all the selected slots are common, and all of them will be used for promoting the products. We refer to the other variant of the problem as the Disjoint Multi-Product Slot Selection Problem. In this problem, we are given a trajectory and billboard database, and a set of $\ell$ integers. This problem asks to choose nonintersecting $\ell$ many subsets of slots bounded by their respective cardinality, such that the total selection cost is minimized and the influence demand of each product gets satisfied. We have developed approximation algorithms for these problems. In particular, we make the following contributions in this paper:
\begin{itemize}
    \item We extend the influential slot selection problem to the setting where multiple products are to be advertised. We study two related problems in this direction.
    \item As both problems are NP-hard, we propose approximation algorithms to solve these problems.
    \item To address the scalability issues, we have developed efficient heuristic solutions.
    \item  A number of experiments have been carried out to show the effectiveness and efficiency of the proposed solution approaches.
\end{itemize}
\paragraph{\textbf{Organization of the Paper}}
The rest of the paper has been organized as follows. Section \ref{Sec:BPD} describes the required preliminary concepts and defines the problem formally. The proposed solution approaches have been described in Section \ref{Sec:Algo}. Section \ref{Sec:EE} describes the experimental evaluation. Finally, Section \ref{Sec:CFD} presents the concluding remarks of our study.

\section{Background and Problem Definition} \label{Sec:BPD}
\subsection{Trajectory and Billboard Database}
A trajectory database contains the location information of a set of people in a city across different time stamps. In the context to our problem, a trajectory database $\mathcal{D}$ contains tuples of the form $(u_i, \texttt{loc},[t_a,t_b], \mathcal{P}(u_i))$ and this signifies that the person $u_i$ is at the location $\texttt{loc}$ for the duration $t_a$ to $t_b$. Here, $\mathcal{P}(u_i)$ denotes the set of products in which user $u_i$ will be interested in. For any person $u_i$, $loc_{[t_x,t_y]}(u_i)$ denotes the locations of $u_i$ during the time interval $[t_x,t_y]$. Let $\mathcal{U}$ denote the set of people whose movement data is contained in $\mathcal{D}$. Let, $T_{1}= \underset{\tau \in \mathcal{D}}{min} \ t_a$ and $T_{2}= \underset{t \in \mathcal{D}}{max} \ t_b$ and we say that the trajectory database $\mathcal{D}$ contains the movement data from time stamp $T_1$ to $T_2$. A billboard database $\mathcal{B}$ contains the information about billboard slots. Typically, this contains the tuples of the following form $(b_{id}, s_{id}, \texttt{loc}, \texttt{duration})$ where $b_{id}$ and $s_{id}$ denote the billboard id and slot id. \texttt{loc} and \texttt{duration} signify the location of the billboard and the duration of the slot.
\subsection{Billboard Advertisement}
Let a set of $m$ billboards $B=\{b_1, b_2, \ldots, b_m\}$ be placed in different locations of a city. For simplicity, we assume that all the billboards can be leased for a multiple of a fixed duration $\Delta$, which is called a `slot' and has been stated in Definition \ref{Def:Slot}.
\begin{definition} [Billboard Slot] \label{Def:Slot}
A billboard slot is defined by a tuple consisting of two entities, the billboard ID and the duration, that is, $(b_{id}, [t,t+\Delta])$. 
\end{definition}
The set of all billboard slots is denoted by $\mathcal{BS}$, i.e., $\mathcal{BS}=\{(b_i, [t,t+\Delta]): i \in \{1,2, \ldots, m\} \text{ and } t \in \{T_1, T_1+\Delta, T_1+ 2 \Delta, \ldots, T_2-\Delta \}\}$. It can be observed that $|\mathcal{BS}|=m \cdot \frac{T}{\Delta}$ where $T=T_{2}-T_{1}$. For simplicity, we assume that $T$ is perfectly divisible by $\Delta$. For any slot $s_j \in \mathcal{BS}$, $b_{s_j}$ denotes the corresponding billboard and $[t^{s}_{s_j}, t^{f}_{s_j}]$ denotes the time interval for the slot $s_j$ where $t^{f}_{s_j}=t^{s}_{s_j} + \Delta$. Now, we state the notion of the influence probability of a billboard slot in Definition \ref{Def:Inf_Prob}.
\begin{definition} [Influence Probability of a Billboard slot]  \label{Def:Inf_Prob}
Given a slot $s_j \in \mathcal{BS}$ and a person $u_i \in \mathcal{U}$, the influence probability of $s_j$ on $u_i$ is denoted by $Pr(s_j, u_i)$ and can be computed using the following conditional equation.
\[
    Pr(s_j,u_i)= 
\begin{cases}
    \frac{size(s_j)}{\underset{s_k \in \mathcal{BS}}{max} \ size(b_{s_k})},& \text{if } loc_{[t^{s}_{s_j}, t^{f}_{s_j}]}(u_i)=loc(s_j) \\
    0,              & \text{otherwise}
\end{cases}
\]
\end{definition}
Now, we define the influence of a subset of billboard slots in \ref{Def:Influence}.
\begin{definition} [Influence of Billboard Slots] \label{Def:Influence}
Given a trajectory database $\mathcal{D}$, and a subset of billboard slots $\mathcal{S} \subseteq \mathcal{BS}$, the influence of $\mathcal{S}$ can be defined as the expected number of trajectories that are influenced, which can be computed using Equation \ref{Eq:Equation}.

\begin{equation} \label{Eq:Equation}
\mathcal{I}^{'}(\mathcal{S})= \underset{u_i \in \mathcal{U}}{\sum} [1- \underset{s_j \in \mathcal{S}}{\prod} (1-Pr(s_j,u_i))]
\end{equation}
\end{definition}
Here, $Pr(b_j,u_i)$ denotes the influence probability of the billboard slot $b_j$ on the people $u_i$, and $\mathcal{I}^{'}(\mathcal{S})$ denotes the influence of the billboard slots of $\mathcal{S}$. $\mathcal{I}^{'}()$ is the influence function which maps each possible subset of billboard slots to its corresponding influence value, i.e., $\mathcal{I}^{'}: 2^{\mathcal{BS}} \longrightarrow \mathbb{R}_{0}^{+}$ where $\mathcal{I}^{'}(\emptyset)=0$. 
\par Now, it is important to observe that every product may not be relevant to every trajectory user. For effective advertisement, it is important to influence the relevant people for a particular product. Consider $\mathcal{P}= \{1,2, \ldots,p\}$ denotes the set of products. From the trajectory database, for every trajectory user, we have the information about the products for which he is relevant. For a given product, $j \in [p]$, we state the notion of Product Specific Influence for a given billboard slot in Definition \ref{Def:Product_Influence}.
\begin{definition}[Product Specific Influence] \label{Def:Product_Influence}
 Given a specific product and a subset of billboard slots $\mathcal{S} \subseteq \mathcal{BS}$, the product-specific influence of $\mathcal{S}$ for the product $j$ is denoted by $\mathcal{I}_{j}(\mathcal{S})$ and can be computed using Equation \ref{Eq:product_influence}.
\begin{equation} \label{Eq:product_influence}
    \mathcal{I}_{j}(\mathcal{S})=  \underset{u_i \in \mathcal{U}_{k}}{\sum} [1- \underset{s_j \in \mathcal{S}}{\prod} (1-Pr(s_j,u_i))]
\end{equation}
Here, $\mathcal{U}_{k}$ denotes the set of users relevant to the $k$-th product and $\mathcal{I}$ is the product specific influence function, i.e., $\mathcal{I}: 2^{\mathcal{BS}} \times \mathcal{P} \longrightarrow \mathbb{R}^{+}_{0}$. 
\end{definition}

\subsection{Submodular Set Functions and Continuous Extensions}
A function $f: 2^{\mathcal{N}} \rightarrow \mathbb{R}$ is submodular if for every $A \subseteq B \subseteq N$, $f(A \cup \{e\}) - f(A) \geq f(B \cup \{e\}) - f(B)$ such that  $e \in \mathcal{N}$ and  $e \notin B$. Such functions arise in many applications. The continuous extension of submodular set functions has played an important role in algorithmic aspects. The idea is to extend a discrete  set function $f: 2^{\mathcal{N}} \rightarrow \mathbb{R}$ to continuous space $[0,1]^{\mathcal{N}}$. Here, we are mainly concerned with multi-linear extension motivated by maximization problems and refer the interested reader to Calinescu et al. \cite{calinescu2007maximizing}, Vondrak \cite{vondrak2007submodularity} for a detailed discussion.
\par The multi-linear extension is a real-valued set function $f: 2^{\mathcal{N}} \rightarrow \mathbb{R}$, denoted by $F$ and defined as follows:
\begin{equation}
F(x) = \underset{\mathcal{S}\subseteq N}{\sum} f(\mathcal{S}) \underset{i \in \mathcal{S}}{\prod} x_{i} \underset{j \notin \mathcal{S}}{\prod} (1-x_{j})
\end{equation}
where $x \in [0,1]^N$, and the random subset $\mathcal{S} \subseteq N$ is drawn by including each element $i \in N$ independently with probability $x_i$, i.e., $\Pr[i \in \mathcal{S}] = x_i \quad \text{independently for each } i$.
\par For any two vectors $p,q \in [0,1]^{N}$, we use $p \lor q$ and $p \land q$ to denote the coordinate-wise maximum and minimum, respectively of $p$ and $q$. We also make use of the notation $w_{e}(t) \leftarrow F(y(t) \lor 1_{e}) -  F(y(t))$, where $1_{e}$ is the characteristic vector of set \{e\}.

\subsection{Problem Definition}
\paragraph{\textbf{Common and Disjoint Multi-Product Slot Selection Problem.}} As mentioned previously, we have studied the Multi-Product Slot Selection Problem in two variants. First, we talk about the variant where a fixed number of slots will be selected for all the products. In this problem we are given with a trajectory and billboard database (which includes the cost function $w: \mathcal{BS} \longrightarrow \mathbb{R}^{+}$), , and  $\ell$ positive integers $k_1, k_2, \ldots, k_{\ell} \in \mathbb{Z}^{+}$, the task is to find out a subset of the slots $\mathcal{S} \subseteq \mathcal{BS}$ such that the total cost gets minimized and for each product $j \in [\ell]$, their influence demand constraint gets satisfied, i.e., for all $j \in [\ell]$, $I_{j}(\mathcal{S}) \geq k_j$. From a computational point of view, this problem can be posed as follows:

\begin{tcolorbox}
\underline{\textsc{Common Multi-Product Slot Selection Problem}} \\
\textbf{Input:} Billboard $\mathcal{B}$ and Trajectory Database $\mathcal{D}$, Cost Function $w: \mathcal{BS} \longrightarrow \mathbb{R}^{+}$, $\ell$ integers $k_1, k_2, \ldots, k_{\ell} \in \mathbb{Z}^{+}$ for influence demand constraint.

\textbf{Problem:} Select a subset of the slots $\mathcal{S} \subseteq \mathcal{BS}$ such that $\underset{v \in \mathcal{S}}{\sum} w(v)$ gets minimized and the constraints $f_{j}(\mathcal{S}) \geq k_j$ are satisfied.
\end{tcolorbox}

In the Disjoint Multi-Product Slot Selection Problem, along with the trajectory and billboard database, and the cost function, we are given with two sets of $\ell$ integers $p_1, p_2, \ldots, p_{\ell} \in \mathbb{Z}^{+}$ and $\sigma_1, \sigma_2, \ldots, \sigma_{\ell} \in \mathbb{Z}^{+}$. The task here is to select $\ell$ many subsets of slots $\mathcal{S}_1, \mathcal{S}_2, \ldots, \mathcal{S}_{\ell} \subseteq \mathcal{BS}$ such that for all $j \in [\ell]$, $|\mathcal{S}_{j}| \leq k_j$, for all $i \neq j$, $\mathcal{S}_{i} \cap \mathcal{S}_{j} = \emptyset$, and for all $j \in [\ell]$, $I_{j}(\mathcal{S}_{j}) \geq \sigma_{j}$. From a computational point of view, this problem can be posed as follows: 
\begin{tcolorbox}
\underline{\textsc{Disjoint Multi-Product Slot Selection Problem}} \\
\textbf{Input:} Billboard $\mathcal{B}$ and Trajectory Database $\mathcal{D}$, Two sets of $\ell$ integers $p_1, p_2, \ldots, p_{\ell} \in \mathbb{Z}^{+}$ and $\sigma_1, \sigma_2, \ldots, \sigma_{\ell} \in \mathbb{Z}^{+}$.

\textbf{Problem:} Select $\ell$ many subsets of slots $\mathcal{S}_{1}, \mathcal{S}_{2}, \ldots, \mathcal{S}_{\ell} \subseteq \mathcal{BS}$ such that:
\begin{itemize}
    \item [1.] For all $j \in [\ell]$, $|\mathcal{S}_{j}| \leq k_j$
    \item [2.] For all $i \neq j$, $\mathcal{S}_{i} \cap \mathcal{S}_{j}=\emptyset$
    \item [3.] For all $j \in [\ell]$, $I_{j}(\mathcal{S}_{j}) \geq \sigma_{j}$
\end{itemize}
\end{tcolorbox}

As our problem is closely related to the Multi-submodular Cover Problem, first, we describe and generalize it.
\paragraph{\textbf{Multi-Submodular Cover Problem.}} In Multi-Submodular Cover Problem, we are given with a ground set $\mathcal{V}=\{v_1, v_2, \ldots, v_n\}$, a cost function $w:\mathcal{V} \longrightarrow \mathbb{R}^{+}$ which maps each ground set element to its corresponding cost, a set of $\ell$ submodular functions $f_1, f_2, \ldots, f_{\ell}$, and a set of $\ell$ integers $k_1, k_2, \ldots, k_{\ell} \in \mathbb{Z}^{+}$. This problem asks to choose a subset $\mathcal{S}$ of ground set elements, i.e., $\mathcal{S} \subseteq \mathcal{V}$, such that the total cost of the chosen elements gets minimized and for any $i \in [\ell]$, $f_{i}(S) \geq k_i$, i.e., the constraints on the functional values of the submodular functions get satisfied.  This is an optimization problem which can be posed as follows:
\begin{equation}
    \mathcal{S}^{*} \longleftarrow \underset{\substack{\mathcal{S} \subseteq \mathcal{V} \\ \forall j \in [\ell], f_{j}(\mathcal{S}) \geq k_j} }{argmin} \ \underset{\substack{v \in \mathcal{S} }}{\sum} w(v)
\end{equation}

Here, $\mathcal{S}^{*}$ denotes the optimal solution for this problem. From the computational point of view, this problem can be posed as follows:

\begin{tcolorbox}
\underline{\textsc{Multi-Submodular Cover Problem}} \\
\textbf{Input:} The ground set $X=\{x_1, x_2, \ldots, x_n\}$, the cost function $w: X \longrightarrow \mathbb{R}^{+}$, $\ell$ many submodular functions $f_1, f_2, \ldots, f_{\ell}$ defined on the ground set $X$, $\ell$ many positive integers $k_1, k_2, \ldots, k_{\ell} \in \mathbb{Z}^{+}$.

\textbf{Problem:} Select a subset of the ground set elements $\mathcal{S} \subseteq X$ which minimizes $\underset{x \in \mathcal{S}} {\sum} w(x)$ but satisfies $f_{j}(\mathcal{S}) \geq k_j$ for all $j \in [\ell]$.
\end{tcolorbox}
An instance of the Multi-Submodular Cover Problem is said to be $r$-sparse if each ground set element can be involved in a maximum of $ r$ objectives.  This problem has been studied by Chekuri et al. \cite{chekuri2022algorithms}, and they proposed a randomized bi-criteria approximation algorithm which runs in polynomial time. The formal result has been stated in Theorem \ref{Th:1}.
\begin{theorem} \label{Th:1}
    There exists a randomized bi-criteria approximation algorithm that for a given instance of the Multi-Submodular Cover Problem it produces a set $\mathcal{S} \subseteq X$ such that (i) For all $j \in [\ell]$, $f_{j}(\mathcal{S}) \geq (1-\frac{1}{e}-\epsilon)k_j$ and (ii) $\mathbb{E}[w(\mathcal{S})] \geq \mathcal{O}(\frac{1}{\epsilon} ln r)$.
\end{theorem}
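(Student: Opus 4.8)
The plan is to lift the discrete covering problem to a continuous one via the multilinear extensions $F_1, \dots, F_\ell$ of $f_1, \dots, f_\ell$, solve the continuous relaxation by a continuous greedy process, and then round, with the rounding blow-up controlled by the $r$-sparse structure. First I would replace each constraint $f_j(\mathcal{S}) \ge k_j$ by its fractional analogue $F_j(y) \ge k_j$ over $y \in [0,1]^n$ and seek to minimize $\sum_i w(x_i)\, y_i$. Since a random set drawn according to $y$ has $\mathbb{E}[f_j] = F_j(y)$, the fractional optimum lower-bounds the integral optimum $w(\mathcal{S}^{*})$, so it suffices to produce a fractional point of comparable cost that covers every constraint to the target level.

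Next I would run a continuous greedy that grows $y(t)$ from the origin, at each infinitesimal step moving in the direction that maximizes the aggregate marginal coverage gain per unit cost over the constraints not yet satisfied, using the marginal quantity $w_e(t) = F(y(t) \vee 1_e) - F(y(t))$ defined earlier. Concavity of the multilinear extension along nonnegative directions together with submodularity yields, for each still-active constraint $j$, the differential inequality
\[
\frac{d}{dt} F_j(y(t)) \ge k_j - F_j(y(t)),
\]
whose solution gives $F_j(y(T)) \ge (1 - e^{-T})\,k_j$. Taking $T = 1$ and discretizing time into $O(1/\epsilon)$ steps so that the discretization error is absorbed into the slack, this delivers the first criterion $F_j(y) \ge (1 - \tfrac{1}{e} - \epsilon)\,k_j$ for all $j \in [\ell]$.

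Finally I would round $y$ to an integral set $\mathcal{S}$ by independent randomized rounding, including element $x_i$ with probability $y_i$. Because $\mathbb{E}[f_j(\mathcal{S})] = F_j(y)$, each constraint holds in expectation; the $r$-sparse hypothesis --- every element participating in at most $r$ objectives --- bounds the correlation among constraints and lets a Chernoff-type tail bound on the number of unsatisfied constraints go through with a union bound over $r$ rather than $\ell$ events. To boost coverage to the claimed level with high probability I would scale the fractional solution, or equivalently repeat the rounding and take the union, by a factor of $O(\tfrac{1}{\epsilon}\ln r)$; this inflates the expected cost by exactly that factor, yielding the second criterion $\mathbb{E}[w(\mathcal{S})] \le O(\tfrac{1}{\epsilon}\ln r)\cdot w(\mathcal{S}^{*})$.

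The continuous greedy part is by now routine; the hardest step is the rounding analysis, and in particular keeping the cost blow-up at $O(\tfrac{1}{\epsilon}\ln r)$ rather than the naive $O(\ln \ell)$. This is precisely where the sparsity parameter must be exploited: one has to condition on the local structure at each element and use that its rounding affects at most $r$ of the objectives, so that the governing union bound runs over $r$ events. As Theorem~\ref{Th:1} is quoted from Chekuri et al.~\cite{chekuri2022algorithms}, I would defer the full concentration argument to their work and only verify that the influence functions $\mathcal{I}_j$ arising in our setting are monotone and submodular, so that their hypotheses apply.
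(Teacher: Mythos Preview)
The paper does not prove Theorem~\ref{Th:1}; it is stated purely as a citation of Chekuri et al.~\cite{chekuri2022algorithms}, with no accompanying argument. Your sketch --- continuous greedy on the multilinear extensions to obtain a fractional point with $F_j(y)\ge(1-1/e-\epsilon)k_j$, followed by $O(\tfrac{1}{\epsilon}\ln r)$ independent roundings whose union is output --- is exactly the scheme the paper later instantiates as Algorithm~\ref{Algo:2} (with $\ell=\lceil\log_{1/(1-\epsilon)} r\rceil$ samples and a greedy repair phase), and you correctly recognize at the end that the full concentration argument is deferred to the cited work. So there is nothing in the paper to compare against beyond that.

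One minor caution on your informal explanation of the $\ln r$ dependence: it is not that ``the governing union bound runs over $r$ events rather than $\ell$.'' The union bound is still over all $\ell$ constraints; sparsity instead controls the \emph{cost} of repairing the constraints that remain unsatisfied after rounding (each element charged to at most $r$ repairs), which is what keeps the blow-up at $O(\tfrac{1}{\epsilon}\ln r)$ independent of $\ell$. Since you explicitly hand this step off to \cite{chekuri2022algorithms}, the imprecision does not affect the validity of your plan, but the stated intuition would not lead to a correct proof if carried out literally.
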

\begin{tcolorbox}
\underline{\textsc{Generalized Multi-Submodular Cover Problem}} \\
\textbf{Input:} The ground set $X=\{x_1, x_2, \ldots, x_n\}$, the cost function $w: X \longrightarrow \mathbb{R}^{+}$, $\ell$ many submodular functions $f_1, f_2, \ldots, f_{\ell}$ defined on the ground set $X$, two sets of $\ell$ positive real numbers each  $k_1, k_2, \ldots, k_{\ell} \in \mathbb{Z}^{+}$ and $\sigma, \sigma_2, \ldots, \sigma_{\ell} \in \mathbb{R}^{+}$.

\textbf{Problem:} select $\ell$ many subsets of the ground set $\mathcal{S}_{1}, \mathcal{S}_{2}, \ldots, \mathcal{S}_{\ell} \subseteq X$ such that: 
\begin{itemize}
    \item For all $i,j \in [\ell]$ and $i \neq j$, $\mathcal{S}_{1} \cap \mathcal{S}_{2} = \emptyset$
    \item   For all $i \in [\ell]$, $\underset{x \in \mathcal{S}_{i}}{\sum} \ w(x) \leq k_i$
    \item  For all $i \in [\ell]$, $f_{i}(\mathcal{S}_{i}) \geq k_i$
\end{itemize}
\end{tcolorbox}
\paragraph{\textbf{Generalized Multi-Submodular Cover Problem}} In this problem we are given with a ground set of $n$ elements $X=\{x_1, x_2, \ldots, x_n\}$, a cost function $w$ that maps each ground set element to its corresponding cost; i.e., $w: X \longrightarrow \mathbb{R}^{+}$, $\ell$ many submodular functions $f_1, f_2, \ldots, f_{\ell}$ defined on the ground set $X$, and two sets of $\ell$ many real numbers $k_1, k_2, \ldots, k_{\ell} \in \mathbb{R}^{+}$ and $\sigma, \sigma_2, \ldots, \sigma_{\ell} \in \mathbb{R}^{+}$. The task is to select $\ell$ many subsets of the ground set $\mathcal{S}_{1}, \mathcal{S}_{2}, \ldots, \mathcal{S}_{\ell} \subseteq X$ such that: (i) For all $i,j \in [\ell]$ and $i \neq j$, $\mathcal{S}_{1} \cap \mathcal{S}_{2} = \emptyset$, (ii) For all $i \in [\ell]$, $\underset{x \in \mathcal{S}_{i}}{\sum} \ w(x) \leq k_i$, and (iii) For all $i \in [\ell]$, $f_{i}(\mathcal{S}_{i}) \geq k_i$.

\section{Proposed Solution Approaches} \label{Sec:Algo}
In this section, we describe the proposed solution approaches. Section \ref{Sec:Common} and  \ref{Sec:Disjoint} contain the methodologies for the Common and Disjoint Multi-Product Slot Selection Problem, respectively.

\subsection{Common Multi-Product Slot Selection} \label{Sec:Common}
We analyze the continuous greedy algorithm for general monotone, submodular functions. At each step $t$, in Line No. $3$ to $5$, the marginal gain for each $e \in \mathcal{BS}$ is computed. Next, in Line No. $6$, solve the linear program to get a direction vector, and for each slot in $\mathcal{BS}$, each coordinate of $y(t)$ is updated in Line No. $7$ to $9$. Finally, after time $T$, it returns the fractional solution $y(T)$. Parameter $T$ is the stopping criterion of Algorithm \ref{Algo:1} as mentioned in Theorem \ref{Th:CG}. It is important to note that in Line 8, we modify the direction $y$. The definition of $\delta$ implies that $\delta^{-1}$ is at least $n^{5}$ and is divisible by $T^{-1}$. This ensures that after $T \delta^{-1}$ iterations, $t$ will be exactly equal to $T$.
\begin{theorem}\label{Th:CG} \cite{feldman2011unified}
 For any normalized monotone submodular function $f: 2^{\mathcal{N}} \rightarrow \mathbb{R}^{+}$, a solvable polytope $\mathcal{P}\in [0,1]^{\mathcal{N}}$ and stopping time $T > 0$, Algorithm \ref{Algo:1} finds a point $x \in [0,1]^{\mathcal{N}}$, such that $F(x) \geq [1-e^{-T} -\mathcal{O}(1)] \cdot f(OPT)$.
\end{theorem}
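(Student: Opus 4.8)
The plan is to track the growth of the multilinear extension $F(y(t))$ along the trajectory produced by the algorithm and to reduce the theorem to a discrete recurrence whose solution yields the factor $1 - e^{-T}$. Throughout, let $OPT$ denote an optimal solution, let $1_{OPT}$ be its characteristic vector, and note that $1_{OPT} \in \mathcal{P}$, so it is feasible for the linear program solved in Line $6$. The central quantity is the per-iteration increment $F(y(t+\delta)) - F(y(t))$, and the goal is to show that each step raises $F$ by at least roughly $\delta \cdot (f(OPT) - F(y(t)))$, up to a controllable error.

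First I would establish the key marginal-gain lemma: at every time $t$, the LP direction delivers a gain rate of at least $f(OPT) - F(y(t))$. Since $1_{OPT} \in \mathcal{P}$, the LP optimum $x(t)$ obeys $\sum_{e} w_e(t)\,x(t)_e \geq \sum_{e \in OPT} w_e(t)$. Then, writing $OPT = \{e_1,\dots,e_k\}$ and telescoping $F(y(t) \lor 1_{OPT}) - F(y(t))$ along the partial unions $y(t) \lor 1_{\{e_1,\dots,e_i\}}$, submodularity of $f$ (each marginal taken at a larger point is no larger than $w_{e_i}(t)$ taken at $y(t)$) together with monotonicity gives
\begin{equation}
\sum_{e \in OPT} w_e(t) \;\geq\; F\bigl(y(t) \lor 1_{OPT}\bigr) - F(y(t)) \;\geq\; f(OPT) - F(y(t)),
\end{equation}
where the final step uses $F(y(t) \lor 1_{OPT}) \geq F(1_{OPT}) = f(OPT)$.

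Next I would convert this into a recurrence for $F(y(t))$. The update in Lines $7$--$9$ advances $y$ along $x(t)$ by step $\delta$; because $F$ is concave along any nonnegative direction (a consequence of submodularity), the genuine increase $F(y(t+\delta)) - F(y(t))$ is at least $\delta$ times the directional derivative minus a second-order term of order $\mathcal{O}(\delta^2 n)\,f(OPT)$. Combining with the marginal-gain lemma yields
\begin{equation}
F(y(t+\delta)) - F(y(t)) \;\geq\; \delta\,\bigl(f(OPT) - F(y(t))\bigr) - \mathcal{O}(\delta^2 n)\,f(OPT).
\end{equation}
Setting $g(t) = f(OPT) - F(y(t))$, this is the discrete analogue of $g'(t) \leq -g(t)$; solving it over the $T\delta^{-1}$ iterations produces $F(y(T)) \geq (1 - e^{-T})\,f(OPT)$, while the accumulated second-order terms sum to $\mathcal{O}(T\delta n)\,f(OPT)$. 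Because $\delta^{-1}$ is at least $n^5$, this cumulative error is driven below any fixed constant, which is exactly the vanishing loss term appearing in the statement.

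The main obstacle is the discretization estimate in the third step: one must rigorously bound the gap between the marginal $w_e(t)$ measured at the single point $y(t)$ and the true integral of the directional derivative over $[t, t+\delta]$, and verify that moving in all coordinates of $x(t)$ simultaneously does not forfeit more than the claimed $\mathcal{O}(\delta^2 n)$ amount. Here concavity of $F$ along nonnegative directions is precisely what makes the cross-terms between coordinates controllable, and the polynomially small choice of $\delta$ then suppresses the total error, completing the argument.
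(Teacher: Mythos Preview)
Your proof sketch is correct and follows the standard continuous-greedy analysis: bound the LP direction's gain by $f(OPT) - F(y(t))$ via submodularity and monotonicity, convert this into a discrete recurrence of the form $g(t+\delta) \leq (1-\delta)g(t) + \mathcal{O}(\delta^2 n)f(OPT)$, and unwind it over $T\delta^{-1}$ steps to obtain the $1 - e^{-T}$ factor with an $o(1)$ loss absorbed by the choice $\delta^{-1} \geq n^5$.

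There is nothing to compare against, however, because the paper does not give its own proof of this theorem. Theorem~\ref{Th:CG} is stated with the citation \cite{feldman2011unified} and is used as a black box; the surrounding paragraph only describes what Algorithm~\ref{Algo:1} does and explains why the choice of $\delta$ ensures $t$ lands exactly on $T$. Your argument is essentially the proof one would find in the cited source (and in the Calinescu--Chekuri--P\'al--Vondr\'ak line of work), so it is appropriate, but you should be aware that the paper itself treats the result as imported rather than proved.
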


\begin{algorithm}[h]
\scriptsize
\SetAlgoLined
\KwData{Trajectory Database $\mathcal{D}$, Billboard Database $\mathcal{B}$, Influence Functions $\mathcal{I}()$, Billboard Slots $\mathcal{BS}$}
\KwResult{ A solution of vector $\mathcal{Y}(T)$}
Initialize $t \leftarrow 0$, $\mathcal{Y}(0) \leftarrow 1_{\emptyset}$, $n \leftarrow |\mathcal{BS}|$, $\delta \leftarrow T(\lceil n^{5}T \rceil)^{-1}$ \;

\While{$t< T$}{
\For{$\text{each}~ e \in \mathcal{BS}$}{
$w_{e}(t) \leftarrow F(y(t) \lor 1_{e}) -  F(y(t)) $
}
$I(t) \leftarrow argmax\{x \cdot w(t) ~|~ x \in \mathcal{P}\}$\;
\For{$\text{each}~ e \in \mathcal{BS}$}{
$y_{e}(t+\delta) \leftarrow y_{e}(t) + \delta I_{e}(t) \cdot (1 - y_{e}(t))$
}
$t \leftarrow t + \delta$
}

\Return $y(T)$\;
\caption{Continuous Greedy Algorithm}
\label{Algo:1}
\end{algorithm}

\par We address the \textit{Common Multi-Product Slot Selection Problem} in Algorithm \ref{Algo:2}, where a single slot set must satisfy the influence demands of multiple products under budget. We propose a bi-criteria approximation algorithm combining continuous greedy and randomized rounding. First, in Line No. $1$, we scale each product’s influence function $\mathcal{I}_j$ such that $\mathcal{I}_j(\mathcal{BS}) = 1$, and set $k_j \leftarrow 1$, and in Line No. $2$, we define the polytope $\mathcal{P}$. In Line No. $3$, using Algorithm \ref{Algo:1}, we solve the multilinear extension over the matroid polytope to obtain a fractional solution $x \in [0,1]^{|\mathcal{BS}|}$. In the rounding step, Line No. $5$ to $7$, sample $\ell = \lceil \log_{1/(1 - \epsilon)}(r) \rceil$ random subsets from $x$ to form $\mathcal{S}$. Next, in Line No. $8$ to $14$ for any product $j$ where $\mathcal{I}_j(S)$ is below $(1 - \frac{1}{e} - 2\epsilon)k_j$, add slots greedily by marginal gain until the bound $(1 - \frac{1}{e})k_j$ is met.

\begin{algorithm}[h!]
\scriptsize
\caption{Bi-criteria Approximation Algorithm for Multi-Product Influence Maximization}
\label{Algo:2}
\KwData{ Trajectory Database $\mathcal{D}$, Billboard Slots $\mathcal{BS}$, Influence Functions $\{\mathcal{I}_j(\cdot)\}_{j=1}^h$, Cost Function $w: \mathcal{BS} \to \mathbb{R}^+$, Sparsity $r$, Approximation Parameter $\epsilon > 0$, Influence Thresholds $\{k_j\}_{j=1}^h$}
\KwResult{ A set of billboard slots $\mathcal{S} \subseteq \mathcal{BS}$ satisfying all influence constraints approximately}
\textbf{Normalize:} For each $j \in [h]$, set $\mathcal{I}_j(\mathcal{BS}) = 1$ and scale $k_j \gets 1$\;
Let $\mathcal{P} \subseteq [0,1]^{|\mathcal{BS}|}$ be the constraint polytope\;
$x \gets$ ContinuousGreedy$(\mathcal{BS}, \mathcal{P}, T=1)$\;
$\alpha \gets 1 - \epsilon$, $\ell \gets \lceil \log_{1/\alpha} r \rceil$, $S \gets \emptyset$\;
\For{$t = 1$ to $\ell$}{
    Sample $\mathcal{S}_t \subseteq \mathcal{BS}$ by including each $i \in \mathcal{BS}$ independently with probability $x(i)$\;
}
$\mathcal{S} \gets \mathcal{S} \cup \mathcal{S}_{1} \cup  \mathcal{S}_{2} \cup \ldots \cup \mathcal{S}_t$\;
\For{$j = 1$ to $h$}{
    \If{$\mathcal{I}_j(\mathcal{S}) < (1 - \frac{1}{e} - 2\epsilon) \cdot k_j$}{
        $M_j \gets \emptyset$\;
        \While{$\mathcal{I}_j(M_j) < (1 - \frac{1}{e}) \cdot k_j$}{
            $u^* \gets \arg\max_{u \in \text{BS} \setminus M_j} \left[ \mathcal{I}_j(M_j \cup \{u\}) - \mathcal{I}_j(M_j) \right]$\;
            $M_j \gets M_j \cup \{u^*\}$\;
        }
        $\mathcal{S} \gets \mathcal{S} \cup M_j$\;
    }
}
\Return $\mathcal{S}$
\end{algorithm}

\begin{theorem}\cite{chekuri2022algorithms}
Let $r$ be the maximum number of functions any slot contributes to (i.e., sparsity), and let $\epsilon > 0$. Then with high probability, the Algorithm \ref{Algo:2} returns a set $S$ such that for all $j \in [\ell]$, $I_j(S) \geq \left(1 - \frac{1}{e} - \epsilon \right) \cdot k_j$, and the expected cost satisfies $\mathbb{E}[w(S)] = O\left( \frac{1}{\epsilon} \log r \right) \cdot \textsc{OPT}$.
\end{theorem}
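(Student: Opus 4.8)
The plan is to follow the multi-submodular cover template of Chekuri et al., treating each normalized product influence $\mathcal{I}_j$ as one covering objective, and to split the argument into a feasibility (demand-satisfaction) part and a cost part. Throughout I write $h$ for the number of products and reserve $\ell$ for the number of samples drawn in Lines~5--7.

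First I would establish the fractional guarantee. Applying Theorem~\ref{Th:CG} with $T=1$ to the multilinear extension $F_j$ of each normalized $\mathcal{I}_j$, the point $x$ returned in Line~3 satisfies $F_j(x) \ge (1-\tfrac{1}{e}-O(\epsilon))\,k_j$ simultaneously for every product $j$; here I would use that the optimal integral solution is feasible for the polytope $\mathcal{P}$ and already meets each demand, so the continuous-greedy value dominates $(1-1/e)$ times each $k_j$. By the definition of the multilinear extension, one independent-rounding sample $\mathcal{S}_t$ (Line~6) obeys $\mathbb{E}[\mathcal{I}_j(\mathcal{S}_t)] = F_j(x)$. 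Next comes the rounding and amplification step: I would convert this expectation into a constant-probability success bound, so that each sample meets the threshold $(1-\tfrac1e-\epsilon)k_j$ with probability at least some constant $p$. Since $\mathcal{I}_j$ is monotone, the union $\mathcal{S} = \mathcal{S}_1 \cup \cdots \cup \mathcal{S}_\ell$ satisfies $\mathcal{I}_j(\mathcal{S}) \ge \max_t \mathcal{I}_j(\mathcal{S}_t)$, so the probability that all $\ell$ samples fail for product $j$ is at most $(1-p)^\ell$. The choice $\ell = \lceil \log_{1/(1-\epsilon)} r\rceil$ is exactly what drives this failure probability below $1/r$, and a union bound over the products yields the demand guarantee with high probability. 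The greedy top-up in Lines~8--14 then acts as a deterministic safety net: whenever a product is still short, it augments a set $M_j$ until $\mathcal{I}_j(M_j) \ge (1-\tfrac1e)k_j$ (the loop terminates because $\mathcal{I}_j(\mathcal{BS})=k_j$), and monotonicity lifts this to $\mathcal{I}_j(\mathcal{S}) \ge (1-\tfrac1e)k_j \ge (1-\tfrac1e-\epsilon)k_j$.

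For the cost I would argue that a single sample has expected cost $\sum_i x_i\,w(i) = w\cdot x \le \textsc{OPT}$, since $x$ lies in the feasible polytope $\mathcal{P}$ encoding the budget; linearity over the $\ell$ independent samples gives expected sampling cost at most $\ell\cdot\textsc{OPT}$, and $\ell = O(\tfrac1\epsilon \log r)$ follows from $\ln\frac{1}{1-\epsilon} \ge \epsilon$. It then remains to bound the extra cost of the greedy top-ups: since they fire only on the low-probability event that sampling fails and each covers a single constraint to $(1-1/e)k_j$ at cost $O(\log)\cdot\textsc{OPT}$ by the classical submodular-cover analysis, their probability-weighted contribution stays within the same order.

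The hard part will be the per-sample success-probability bound: turning the equality $\mathbb{E}[\mathcal{I}_j(\mathcal{S}_t)]=F_j(x)$ into the statement that $\mathcal{I}_j(\mathcal{S}_t)$ exceeds $(1-\tfrac1e-\epsilon)k_j$ with constant probability requires a concentration (or anti-concentration) property for monotone submodular functions under independent rounding, and one must track that the $\epsilon$ losses from continuous greedy, from rounding, and from the union bound compose into the single $\epsilon$ in the statement. I also expect the cost accounting of the top-up loop to demand care, precisely because one must certify that it fires rarely enough not to dominate the $O(\tfrac1\epsilon\log r)\cdot\textsc{OPT}$ term.
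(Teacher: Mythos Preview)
The paper does not prove this theorem at all: it is stated with the citation \cite{chekuri2022algorithms} and immediately followed by the complexity paragraph, so there is nothing to compare your argument against. Your proposal is therefore an attempt to reconstruct the Chekuri--Quanrud proof rather than to reproduce anything in the present paper.

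That said, your sketch has a real gap in the amplification step. You argue that one sample covers product $j$ with some constant probability $p$, so $\ell=\lceil\log_{1/(1-\epsilon)}r\rceil$ samples drive the per-product failure probability below $1/r$, and then you take a union bound over the products. But $r$ is the \emph{sparsity} (the maximum number of objectives any single slot touches), which is always at most the number of products $h$; a union bound over $h$ events each of probability at most $1/r$ therefore gives $h/r\ge 1$, which says nothing. The way $r$ actually enters the Chekuri--Quanrud analysis is not through this naive amplification-plus-union-bound route; sparsity controls the \emph{expected cost of the repair phase}, because each failing constraint $j$ can be fixed by a set whose cost is charged against the portion of \textsc{OPT} relevant to $j$, and an element of \textsc{OPT} is relevant to at most $r$ constraints. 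Your outline of the feasibility side (continuous greedy $\Rightarrow$ Markov-type lower tail on $k_j-\mathcal{I}_j(\mathcal{S}_t)$ $\Rightarrow$ monotone union) is fine, and your cost bound for the sampling phase is correct; what needs to be redone is precisely the accounting that ties the number of rounds and the top-up cost to the sparsity parameter $r$ rather than to the number of products.
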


\paragraph{\textbf{Complexity Analysis.}}
Let $n = |\mathcal{BS}|$ be the number of billboard slots and $\ell$ the number of products. The Continuous Greedy phase runs in $O(n^2 \ell / \epsilon)$ time due to gradient computation over a polytope of size $n$ for each of $\ell$ influence functions. The rounding and union of $\ell = O(\log r / \epsilon)$ random subsets take $O(n \log r / \epsilon)$ time. In the worst case, the repair step examines all $n$ slots per product, taking $O(n \ell)$ time. So the overall time complexity is $O\left( \frac{n^2 \ell}{\epsilon} + n \ell \right)$. The additional space requirement for Algorithm \ref{Algo:2} is $O(n)$ for maintaining the fractional vector and final slot set.

\subsection{Disjoint Multi-Product Slot Selection} \label{Sec:Disjoint}
In this section, we describe the proposed solution approach for the Disjoint Multi-Product Slot Selection Problem. In this problem, we are given a budget and the influence demand for every product. The goal here is to select billboard slots for each product within the budget such that the influence demand for every product is satisfied. Also, a slot can be used for at most one product only. In our proposed methodology, for every possible permutation of products and slots, we perform the following task. As long as the remaining budget of the current advertiser being processed is positive and his influence demand is not satisfied, then in each iteration, we select a slot that makes the maximum possible marginal gain with the existing set of slots that have been chosen for the advertiser. This process terminates once the influence demand is satisfied or the advertiser's budget is exhausted. If the budget of the advertiser is exhausted, then it means that, for the current permutation of advertisers and slots, for this advertiser, the influence demand can not be satisfied. Hence, this does not lead to a feasible solution, and we mark it as an infeasible solution.  This process is continued for all possible permutations of advertisers and slots. Among all the feasible solutions stored in $\mathcal{M}$, we return one with the least cost. Algorithm \ref{Algo:1} describes the whole process in the form of pseudocode. 
\par One point to highlight here is that the number of possible permutations will be of $\mathcal{O}(\ell! \cdot n! )$. By starling's approximation, we can write it as $\mathcal{O}(\ell^{\ell} \cdot n^{n})$. For moderate values of $\ell$ and $n$ (in practice, the value of $n$ could be excessively large), the number of possible permutations could be large. If we consider all of them in the computation, then the execution time requirement will not be practically feasible. To mitigate this problem, in this paper, we sample out independently at random a subset of them. Now, it is an important question what the sample size should be such that the error in the estimation could be bounded with high probability. This leads to the sample complexity analysis, which has been stated subsequently. 

\subsubsection{\textbf{Sample Complexity Analysis.}}
We can use Hoeffding's Inequality to provide a sample bound that guarantees finding a close to optimal solution with high probability \cite{chen2007probability}. Theorem 2 describes the statement.
\begin{theorem}
    Let $X_1, \ X_2, \ \ldots, X_n$ be independent and identically distributed random variables such that for any $i \in [n]$, $a_i \leq X_i \leq b_i$. Let $\bar{X} =\frac{1}{n} \underset{i \in [n]}{\sum} \ X_i$ and $\mu=\mathbb{E}[\bar{X}]$. For any $\epsilon >0$, the following inequality holds 
    \begin{equation}
        Pr[|\bar{X}-\mu| \geq \epsilon] \leq 2 \cdot exp(-\frac{2 n^{2} \epsilon^{2}}{\underset{i \in [n]}{\sum} (b_i - a_i)^{2}}) 
    \end{equation}
\end{theorem}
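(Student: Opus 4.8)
The plan is to prove the bound via the exponential-moment (Chernoff) method, treating the two tails separately and combining them at the end to recover the factor of $2$. First I would fix the upper tail and observe that, for any $s>0$, the event $\{\bar{X}-\mu \geq \epsilon\}$ coincides with $\{e^{s(S-n\mu)} \geq e^{sn\epsilon}\}$, where $S=\sum_{i\in[n]} X_i$ and $n\mu = \sum_{i\in[n]} \mathbb{E}[X_i]$. Applying Markov's inequality to the nonnegative random variable $e^{s(S-n\mu)}$ gives $Pr[\bar{X}-\mu \geq \epsilon] \leq e^{-sn\epsilon}\,\mathbb{E}[e^{s(S-n\mu)}]$. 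By independence of the $X_i$, the moment generating function factorizes as $\mathbb{E}[e^{s(S-n\mu)}] = \prod_{i\in[n]} \mathbb{E}[e^{s(X_i-\mathbb{E}[X_i])}]$, which reduces the problem to bounding the MGF of a single centered, bounded variable.

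The key step, and the one I expect to be the \textbf{main obstacle}, is establishing Hoeffding's Lemma: if $Y$ is a random variable with $\mathbb{E}[Y]=0$ and $a \leq Y \leq b$ almost surely, then $\mathbb{E}[e^{sY}] \leq \exp\!\big(s^2(b-a)^2/8\big)$ for every $s \in \mathbb{R}$. To prove it I would exploit convexity of $y \mapsto e^{sy}$ on $[a,b]$ to write $e^{sy} \leq \frac{b-y}{b-a}e^{sa} + \frac{y-a}{b-a}e^{sb}$ pointwise, then take expectations and use $\mathbb{E}[Y]=0$ to obtain $\mathbb{E}[e^{sY}] \leq \frac{b}{b-a}e^{sa} - \frac{a}{b-a}e^{sb}$. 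Writing the right-hand side as $e^{\phi(s)}$, the remaining (and most delicate) task is to show $\phi(s) \leq s^2(b-a)^2/8$; I would do this by a second-order Taylor expansion of $\phi$ about $s=0$, verifying $\phi(0)=0$, $\phi'(0)=0$, and that $\phi''(s)$ equals the variance of a two-point random variable supported on $\{a,b\}$, which is bounded by $(b-a)^2/4$ for all $s$. Integrating this bound twice yields the claimed cumulant inequality.

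With Hoeffding's Lemma in hand, I would apply it to each centered term $X_i-\mathbb{E}[X_i]$, whose range $[a_i-\mathbb{E}[X_i],\, b_i-\mathbb{E}[X_i]]$ has width $b_i-a_i$, giving $\mathbb{E}[e^{s(X_i-\mathbb{E}[X_i])}] \leq \exp\!\big(s^2(b_i-a_i)^2/8\big)$. Substituting into the factorized bound yields $Pr[\bar{X}-\mu \geq \epsilon] \leq \exp\!\big(-sn\epsilon + \tfrac{s^2}{8}\sum_{i\in[n]}(b_i-a_i)^2\big)$. The final step is a routine optimization over the free parameter $s>0$: the exponent is a quadratic in $s$, minimized at $s^\star = 4n\epsilon / \sum_{i\in[n]}(b_i-a_i)^2$, and plugging $s^\star$ back in collapses the exponent to $-2n^2\epsilon^2 / \sum_{i\in[n]}(b_i-a_i)^2$, establishing the one-sided bound. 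Running the identical argument on $-X_i$ (equivalently replacing $\epsilon$ by $-\epsilon$) controls the lower tail $Pr[\bar{X}-\mu \leq -\epsilon]$ by the same quantity, and a union bound over the two symmetric events produces the factor of $2$ in the stated inequality.
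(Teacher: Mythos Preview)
Your proof is correct and follows the classical Chernoff--Hoeffding route (Markov on the exponential moment, independence to factorize the MGF, Hoeffding's Lemma via convexity plus the $\phi''(s)\le (b-a)^2/4$ argument, optimization over $s$, then a union bound for the two-sided statement). There is nothing to compare against, however: the paper does not prove this theorem at all. It is quoted as Hoeffding's Inequality with a citation and then \emph{used} as a black box in the subsequent sample-complexity argument (Proposition~\ref{Prop:1} and Theorem~\ref{Th:Sample}). So your write-up supplies a valid proof where the paper simply invokes the result.
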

So for estimation, the upper and lower bound of the solution cost of each feasible solution present in the sample is required. We prove the same in Proposition \ref{Prop:1}. 

\begin{proposition}  \label{Prop:1}
For any feasible solution $\mathcal{S}=(S_1, S_2, \ldots, S_{\ell})$ contained in $\mathcal{M}$ the cost of solution will lie in between $0$ and $\underset{s \in \mathcal{BS}}{\sum} w(s)$.
\end{proposition}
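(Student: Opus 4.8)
The plan is to compute the cost of an arbitrary feasible tuple $\mathcal{S}=(S_1, S_2, \ldots, S_{\ell}) \in \mathcal{M}$ directly and to bound it on both sides using only two structural facts already recorded in the problem definition: the cost function $w$ is nonnegative, and the sets $S_1, \ldots, S_{\ell}$ are pairwise disjoint. First I would fix notation for the solution cost, writing it as $c(\mathcal{S})=\sum_{j=1}^{\ell}\sum_{s\in S_j} w(s)$, that is, the total cost summed over every slot used across all products.

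For the lower bound, I would observe that $c(\mathcal{S})$ is a finite sum of terms of the form $w(s)$ with $s \in \mathcal{BS}$, and since $w: \mathcal{BS} \longrightarrow \mathbb{R}^{+}$ takes only nonnegative values, every summand is at least $0$; hence $c(\mathcal{S}) \geq 0$, with equality occurring only in the degenerate case where all the $S_j$ are empty. This direction is immediate and requires no further structure.

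For the upper bound, the key step is to invoke the disjointness constraint $S_i \cap S_j = \emptyset$ for all $i \neq j$. Disjointness guarantees that no slot is counted more than once, so I can collapse the double sum into a single sum over the union, $c(\mathcal{S}) = \sum_{s \in S_1 \cup \cdots \cup S_{\ell}} w(s)$. Because this union is a subset of $\mathcal{BS}$ and every $w(s) \geq 0$, monotonicity of summation over nonnegative terms yields $c(\mathcal{S}) \leq \sum_{s \in \mathcal{BS}} w(s)$, which completes the argument.

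The only point requiring care is this upper bound, and it is precisely where a careless argument would break: if the sets were allowed to overlap, a slot shared by $r$ products would be counted $r$ times, and the bound $\sum_{s \in \mathcal{BS}} w(s)$ could be exceeded by a factor as large as $\ell$. Thus the proof hinges entirely on the disjointness clause being part of the feasibility definition, so I would make that dependence explicit in the write-up. The remaining manipulations are routine rearrangements of finite nonnegative sums and deserve no more than a line each.
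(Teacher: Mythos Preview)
Your argument is correct and follows essentially the same route as the paper: nonnegativity of $w$ gives the lower bound, and disjointness of the $S_j$ together with $S_1\cup\cdots\cup S_\ell\subseteq\mathcal{BS}$ gives the upper bound. The paper additionally spells out the extremal cases (all demands zero for the lower bound, the $S_j$ forming a partition of $\mathcal{BS}$ for the upper bound), but the underlying reasoning is identical to yours.
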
 
\begin{proof} For any advertiser $a_i \in \mathcal{A}$, its influence demand $\sigma_{i} \geq 0$ and for any slot $s_j \in \mathcal{BS}$, its cost $w(s_j) \geq 0$. Now, it can be observed that the lower bound holds only when for all the advertisers, their influence demand is all $0$. In that case, every advertiser will be assigned the empty set, i.e., for all $i \in [\ell]$, $S_i=\emptyset$, and hence, the solution cost will be $0$. In any other case, the cost of the solution will be strictly greater than $0$. The upper bound occurs when the allocated slots to the advertisers, i.e., $S_1, S_2, \ldots, S_{\ell}$, is a partition of $\mathcal{BS}$. This means (i) For all $i,j \in [\ell]$, $S_i \cap S_j= \emptyset$, (ii) $\underset{i \in [\ell]}{\bigcup} S_{i}= \mathcal{BS}$. In this case, the cost of the solution will be $\underset{s \in \mathcal{BS}}{\sum} \ w(s)$. In any other case, the cost of the solution will be strictly less than $\underset{s \in \mathcal{BS}}{\sum} \ w(s)$. Hence, for any solution $\mathcal{S}=(S_1, S_2, \ldots, S_{\ell})$, the cost of the solution will be in the range in between $0$ and $\underset{s \in \mathcal{BS}}{\sum} w(s)$.
\end{proof}
\par Now, we prove Theorem \ref{Th:Sample}, which will prove the lower bound on the sample size for which Algorithm \ref{Algo:1} will provide a close to optimal solution with high probability.
\begin{theorem} \label{Th:Sample}
    For any $\epsilon, \delta \in (0,1)$ if the sample size is greater than equals to $\frac{ln(\frac{2}{\delta}) \cdot w(\mathcal{BS})^{2}}{2 \cdot \epsilon^{2} \cdot (W^{\mathcal{A}})^{2}}$ then the probability error in computation will be strictly less than $\epsilon$ with probability at least $(1-\delta)$. 
\end{theorem}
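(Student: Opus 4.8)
The plan is to apply Hoeffding's inequality directly to the costs of the independently sampled feasible solutions, using the range bound already established in Proposition~\ref{Prop:1}. Let $N$ denote the sample size, and for each $i \in [N]$ let $X_i$ be the cost of the $i$-th feasible solution drawn independently from $\mathcal{M}$. By Proposition~\ref{Prop:1} every such cost lies in $[0, w(\mathcal{BS})]$, so I would take $a_i = 0$ and $b_i = w(\mathcal{BS})$ for every $i$, giving a common range $b_i - a_i = w(\mathcal{BS})$.

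First I would form the empirical mean $\bar{X} = \frac{1}{N}\sum_{i \in [N]} X_i$ with expectation $\mu = \mathbb{E}[\bar{X}]$ under the sampling distribution. Because the $X_i$ are i.i.d.\ with this common range, the sum of squared ranges collapses to $\sum_{i \in [N]} (b_i - a_i)^2 = N \cdot w(\mathcal{BS})^2$. Substituting into Hoeffding's inequality with a deviation threshold $\tau$ yields
\begin{equation}
Pr\big[\,|\bar{X} - \mu| \geq \tau\,\big] \leq 2\exp\left(-\frac{2 N^2 \tau^2}{N \cdot w(\mathcal{BS})^2}\right) = 2\exp\left(-\frac{2 N \tau^2}{w(\mathcal{BS})^2}\right).
\end{equation}

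The next step is to read the admissible estimation error as a quantity scaled by the reference cost $W^{\mathcal{A}}$, i.e.\ to set $\tau = \epsilon \cdot W^{\mathcal{A}}$, so that the event $\{|\bar{X} - \mu| < \tau\}$ expresses that the computed cost deviates from the true expected cost by less than the prescribed (relative) error $\epsilon$. Requiring the right-hand side to be at most $\delta$, I would impose $2\exp\!\big(-2N\tau^2 / w(\mathcal{BS})^2\big) \leq \delta$, take logarithms, and invert for $N$. Rearranging gives $N \geq \frac{\ln(2/\delta)\, w(\mathcal{BS})^2}{2\tau^2}$, and substituting $\tau = \epsilon W^{\mathcal{A}}$ reproduces exactly the claimed threshold $\frac{\ln(2/\delta)\, w(\mathcal{BS})^2}{2\,\epsilon^2 (W^{\mathcal{A}})^2}$. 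Hence, once the sample size meets this bound, the estimation error remains below $\epsilon$ in the scaled sense with probability at least $1-\delta$.

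The main obstacle I anticipate is not the algebra — which is a routine inversion of the exponential tail bound — but pinning down precisely what the \emph{error in computation} denotes and justifying why the factor $(W^{\mathcal{A}})^2$ appears in the denominator. The cleanest reading is that $\epsilon$ measures error relative to the reference cost $W^{\mathcal{A}}$, so the absolute deviation threshold is $\epsilon W^{\mathcal{A}}$; this normalization must be stated explicitly, otherwise the bound looks as though it carries an unexplained constant. A secondary point to verify is that the sampled solutions are genuinely independent, so that the i.i.d.\ hypothesis of Hoeffding's inequality is met, and that restricting attention to the feasible solutions stored in $\mathcal{M}$ does not inflate the range beyond the $[0, w(\mathcal{BS})]$ guarantee of Proposition~\ref{Prop:1}.
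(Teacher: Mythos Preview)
Your proposal is essentially the same as the paper's proof: both invoke Proposition~\ref{Prop:1} to obtain the range $[0,w(\mathcal{BS})]$, plug this into Hoeffding's inequality, set the tail bound to at most $\delta$, and invert to obtain the stated lower bound on the sample size. The only cosmetic difference is that the paper phrases the deviation as $|W^{*}-W^{\mathcal{A}}|\geq \epsilon\cdot W^{*}$ (comparing the true optimum with the best sampled cost) rather than your $|\bar{X}-\mu|\geq \epsilon\cdot W^{\mathcal{A}}$, but the exponent they write already contains $(W^{\mathcal{A}})^{2}$ and the ensuing algebra is identical to yours.
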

\begin{proof}
    Let, $\mathcal{Y}$ denote the set of sampled solutions. $W^{*}$ and $W^{\mathcal{A}}$ denote the optimal solution and the best solution among the solutions in Sample, respectively. Now, by applying Hoeffding's Inequality and Proposition \ref{Prop:1} we gave the following inequality:
\begin{equation} \label{InEq:Hoeefding}
    Pr[|W^{*}- W^{\mathcal{A}}| \geq \epsilon \cdot W^{*}] \leq  2 \cdot exp(- \frac{2 \cdot \epsilon^{2} \cdot (W^{\mathcal{A}})^{2} \cdot |\mathcal{Y}|}{w(\mathcal{BS})^{2}})
\end{equation}
We want to establish the sample size such that $ Pr[|W^{*}- W^{\mathcal{A}}| < \epsilon \cdot W^{*}] \geq 1- \delta$. By simplifying, we have the following:
\begin{center}
    $2 \cdot exp(- \frac{2 \cdot \epsilon^{2} \cdot (W^{\mathcal{A}})^{2} \cdot |\mathcal{Y}|}{w(\mathcal{BS})^{2}}) \leq \delta$\\
    \vspace{0.2 cm}
    $ln(\frac{2}{\delta}) \leq \frac{2 \cdot \epsilon^{2} \cdot (W^{\mathcal{A}})^{2} \cdot |\mathcal{Y}|}{w(\mathcal{BS})^{2}}$
\end{center}
This leads to the lower bound on the sample size, which is
\begin{center}
    $|\mathcal{Y}| \geq \frac{ln(\frac{2}{\delta}) \cdot w(\mathcal{BS})^{2}}{2 \cdot \epsilon^{2} \cdot (W^{\mathcal{A}})^{2}}$
\end{center} 
\end{proof}

\begin{algorithm}[h]
\scriptsize
\caption{Randomized Algorithm for Multi-Product Influence Maximization}
\label{Algo:3}
\KwData{Billboard Slots $\mathcal{BS}$, Cost Function $w:\mathcal{BS} \longrightarrow \mathbb{R}^{+}$, The influence functions $I_{1}, I_{2}, \ldots, I_{\ell}$, $v_1, v_2, \ldots, v_{\ell} \in \mathbb{R}^{+}$, and $\sigma_1, \sigma_2, \ldots, \sigma_{\ell} \in \mathbb{R}^{+}$}
\KwResult{Non-intersecting sets of slots $S_1, S_2, \ldots, S_{\ell} \subseteq \mathcal{BS}$}

$a \longleftarrow \ell !$, $b \longleftarrow |\mathcal{BS}|!$\;
\For{ $\text{ All } p \in [a] \text{ AND }q \in [b]$}{
$\mathcal{M}[p,q] \longleftarrow \text{ Empty Solution}$\;
$Flag[p,q] \longleftarrow 1$\;
}
\For{ $\text{ All } p \in [a] \text{ AND }q \in [b]$}{
$\mathbb{S} \gets \emptyset$\;
\For{$i=1 \text{ to } \ell$}{
$S_i \longleftarrow \emptyset$\;
}
\For{$i=1 \text{ to } \ell$}{
\While{$v_i> 0 \text{ AND }I(S_{i}) < \sigma_{i}$}{
$s^{*} \longleftarrow \underset{\substack{s \in \mathcal{BS} \setminus \mathbb{S} \\ w(s) \leq v_i}}{argmax} \ I(S_i \cup \{ s\}) \ - \ I(S_{i})$\;
$S_{i} \longleftarrow S_{i} \cup \{s^{*}\}$, $\mathbb{S} \longleftarrow \mathbb{S} \cup \{s^{*}\}$\;
$v_{i} \longleftarrow v_{i} - w(s^{*})$\;
}
\If{$I(S_i) < \sigma_{i}$}
{
$Flag[p,q] \longleftarrow 0$\;
$break$\;
}
}
\eIf{$Flag[p,q] = 0$}
{
$\mathcal{M}[p,q] \longleftarrow \text{ Not a Feasible Solution}$\;
}
{
$\mathcal{M}[p,q] \longleftarrow [\{S_1,S_2, \ldots,S_{\ell} \}, \underset{s \in S_1 \cup S_2 \cup, \ldots,S_{\ell} }{\sum} \ w(s)]$\;
}

}
$(S^{*}_1, S^{*}_2, \ldots, S^{*}_{\ell}) \longleftarrow  \underset{(S_1, S_2, \ldots, S_{\ell}) \in \mathcal{M}}{argmin} \ \ \ \underset{s \in S_1 \cup S_2 \cup, \ldots,S_{\ell}}{\sum} w(s)$\;
$\text{return } (S^{*}_1, S^{*}_2, \ldots, S^{*}_{\ell})$\;
\end{algorithm}

\subsubsection{\textbf{Computational Complexity Analysis.}} Consider $|\mathcal{X}|$ denotes the sample size used to implement Algorithm \ref{Algo:1}. Both the \texttt{for} loops from $2$ to $4$ and from $5$ to $20$ will execute $\mathcal{O}(|\mathcal{X}|)$. Again the for loops from $7$ to $8$ and $9$ to $16$ will execute for $\mathcal{O}(\ell)$ times. Now, it is important to analyze how many times the while loop from $10$ to $16$ will execute. Consider the minimum selection cost among all the billboards is denoted by $w_{min}$, i.e., $w_{min}=\underset{bs_j \in \mathcal{BS}}{min} \ w(bs_j)$. Lat $B=\{k_1, k_2, \ldots, k_{\ell}\}$ are the budgets of the advertisers. Among these, let $B_{max}$ denote the maximum budget among all the advertisers, i.e., $B_{max}=\underset{i \in [\ell]}{max} \ k_i$. It can be observed that the maximum $\frac{B_{max}}{w_{min}}$ number of slots can be allocated to an advertiser. Hence, the while loop can be iterated at max $\mathcal{O}(\frac{B_{max}}{w_{min}})$ times. Within the while loop, the main computation is the marginal influence gain computation of a non-allocated slot, and this requires $\mathcal{O}(n \cdot t)$ time, where $n$ and $t$ denote the number of billboard slots and the tuples in the trajectory database, respectively. Hence, the time requirement for the execution of the while loop will be of $\mathcal{O}(\frac{B_{max}}{w_{min}} \cdot n \cdot t)$. All the remaining statements within the for loop from $5$ to $20$ will take $\mathcal{O}(1)$ time. Hence, the time requirement till Line $20$ will be of $\mathcal{O}(|\mathcal{X}| \cdot \ell \cdot \frac{B_{max}}{w_{min}} \cdot n \cdot t)$. In the worst case, it may so happen that all the sampled solutions are feasible solutions. In Line $21$, we find the best solution among all the feasible solutions in $\mathcal{X}$. It will take $\mathcal{O}(|\mathcal{X}|)$. Hence, the time requirement to execute of Algorithm \ref{Algo:1} is of  $\mathcal{O}(|\mathcal{X}| \cdot \ell \cdot \frac{B_{max}}{w_{min}} \cdot n \cdot t)$. In addition space consumed by Algorithm \ref{Algo:1} is to store the feasible solutions which will be of $\mathcal{O}(|\mathcal{X}| \cdot n)$, to store the sets $S_1, S_2, \ldots, S_{\ell}$ and $\mathbb{S}$ and both of them will take $\mathcal{O}(n)$ space. Hence total extra space consumed by $\mathcal{O}(|\mathcal{X}| \cdot n+n)$ which is equal to $\mathcal{O}(|\mathcal{X}| \cdot n)$. Hence, Theorem \ref{TH:Algo_1_resource} holds.
\begin{theorem} \label{TH:Algo_1_resource}
    The sampling-based approach will take $\mathcal{O}(|\mathcal{X}| \cdot \ell \cdot \frac{B_{max}}{w_{min}} \cdot n \cdot t)$ time and $\mathcal{O}(|\mathcal{X}| \cdot n+n)$ space to execute.
\end{theorem}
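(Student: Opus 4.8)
The plan is to prove both bounds by a structural, line-by-line accounting of Algorithm~\ref{Algo:3}: for each loop I multiply the number of times its body executes by the cost of one execution, then combine the nested factors. First I would fix notation, writing $|\mathcal{X}|$ for the number of sampled permutations actually processed (this is what replaces the prohibitive worst case $\mathcal{O}(\ell! \cdot n!)$), $w_{min}=\underset{s \in \mathcal{BS}}{\min}\, w(s)$, and $B_{max}=\underset{i \in [\ell]}{\max}\, v_i$ for the largest per-advertiser budget. The initialization loop (Lines~2--4) and the main loop (Lines~5--20) each range over the sampled index set and hence each contribute a factor $\mathcal{O}(|\mathcal{X}|)$; since they execute sequentially, the combined outer factor is $\mathcal{O}(|\mathcal{X}|)$.

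Next I would bound the inner work done for a single sampled permutation. The two advertiser loops (Lines~7--8 and Lines~9--16) run $\mathcal{O}(\ell)$ times. The key quantitative step is bounding the \texttt{while} loop (Lines~10--16): I would argue that every iteration selects exactly one slot $s^{*}$ with $w(s^{*}) \geq w_{min}$ and then decrements the residual budget $v_i$ by $w(s^{*})$ in Line~13; since $v_i \leq B_{max}$ initially and the loop guard forces $v_i > 0$, at most $\lceil B_{max}/w_{min}\rceil = \mathcal{O}(B_{max}/w_{min})$ slots can be allocated to any one advertiser, so the \texttt{while} loop iterates $\mathcal{O}(B_{max}/w_{min})$ times. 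For the cost of one such iteration, the dominant operation is the $\arg\max$ marginal-gain step (Line~11), which evaluates $I(S_i \cup \{s\}) - I(S_i)$ over the $\mathcal{O}(n)$ still-unallocated slots; by Equation~\ref{Eq:product_influence} each evaluation scans the relevant trajectory tuples at cost $\mathcal{O}(t)$, giving $\mathcal{O}(n \cdot t)$ per iteration, while the updates in Lines~12--13 and the feasibility check in Lines~14--16 are $\mathcal{O}(1)$. Multiplying the nested factors yields $\mathcal{O}\!\left(|\mathcal{X}| \cdot \ell \cdot \tfrac{B_{max}}{w_{min}} \cdot n \cdot t\right)$ for Lines~5--20, and the final $\arg\min$ over the at most $|\mathcal{X}|$ stored feasible solutions (Line~21) adds only $\mathcal{O}(|\mathcal{X}|)$, which is absorbed; this establishes the stated time bound. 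For the space bound I would note that the table $\mathcal{M}$ holds at most $|\mathcal{X}|$ solutions, each of size $\mathcal{O}(n)$, costing $\mathcal{O}(|\mathcal{X}| \cdot n)$, whereas the working sets $S_1,\ldots,S_{\ell}$ and $\mathbb{S}$ together occupy $\mathcal{O}(n)$, for a total of $\mathcal{O}(|\mathcal{X}| \cdot n + n)$.

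The step I expect to be the main obstacle is making the \texttt{while}-loop bound fully rigorous. The clean $B_{max}/w_{min}$ estimate tacitly requires $w_{min} > 0$ and that the budget-feasibility restriction $w(s) \leq v_i$ in Line~11 prevents $v_i$ from ever becoming negative, so I would state these hypotheses explicitly and verify that the loop terminates in one of the two intended ways, namely budget exhaustion ($v_i \leq 0$) or demand satisfaction ($I(S_i) \geq \sigma_i$). A secondary subtlety is justifying the $\mathcal{O}(n \cdot t)$ charge for a single marginal-gain evaluation round: this assumes the influence is computed directly from the trajectory database rather than from a precomputed index, and it relies on caching the current value $I(S_i)$ across iterations so that each candidate slot contributes only $\mathcal{O}(t)$ incremental work. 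I would flag this as the honest worst case and remark that any tighter implementation (for instance, restricting the scan to tuples of users relevant to product $i$) only improves the constant, leaving the asymptotic statement of Theorem~\ref{TH:Algo_1_resource} intact.
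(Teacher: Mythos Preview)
Your proposal is correct and follows essentially the same line-by-line accounting as the paper: outer $\mathcal{O}(|\mathcal{X}|)$ loops, inner $\mathcal{O}(\ell)$ advertiser loops, the $\mathcal{O}(B_{max}/w_{min})$ bound on the \texttt{while} loop via budget decrements, and the $\mathcal{O}(n\cdot t)$ charge for each marginal-gain scan, with the same space decomposition into $\mathcal{O}(|\mathcal{X}|\cdot n)$ for $\mathcal{M}$ plus $\mathcal{O}(n)$ for the working sets. If anything, your explicit remarks on the $w_{min}>0$ assumption, loop termination, and the caching of $I(S_i)$ add rigor that the paper's own analysis leaves implicit.
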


\section{Experimental Evaluations} \label{Sec:EE}
This section describes the experimental evaluation of the proposed solution approaches. Initially, we start by describing the datasets.

\subsection{Dataset Descriptions.}
We used two real-world datasets: New York City (NYC)\footnote{\url{https://www.nyc.gov/site/tlc/about/tlc-trip-record-data.page}} and Los Angeles (LA)\footnote{\url{https://github.com/Ibtihal-Alablani}}, both adopted in previous studies \cite{ali2022influential,ali2023influential,ali2024influential,ali2024regret}. The NYC dataset contains 227,428 check-ins (Apr 2012–Feb 2013), while the LA dataset includes 74,170 check-ins across 15 streets. Billboard data for both cities is sourced from LAMAR\footnote{\url{http://www.lamar.com/InventoryBrowser}}, comprising 1,031,040 slots in NYC and 2,135,520 in LA. In Table \ref{Table:Trajectory_Info} $|\mathcal{T}|$, $|\mathcal{U}|$, and $Avg_{dist}$ denote the number of trajectories, the number of unique users, and the average distance between trajectories. Similarly, in Table \ref{Table:Billboard_Info} $|\mathcal{B}|$, $|\mathcal{BS}|$, and $Avg_{dist}$ denote the number of billboards, the number of billboard slots, and the average distance between the billboards placed across the city. The descriptions of the datasets are summarized in Table \ref{Table:Trajectory_Info} and Table \ref{Table:Billboard_Info}.
\begin{table}[h]
\centering
\begin{minipage}{0.48\linewidth}
\centering
\scriptsize
\begin{tabular}{|c|c|c|c|}
\hline
\textbf{Dataset} & $|\mathcal{T}|$ & $|\mathcal{U}|$ & $Avg_{dist}$ \\ \hline
NYC & $227428$ & $1083$ & $3.12~km$ \\ \hline
LA & $74170$ & $2000$ & $0.61~km$ \\ \hline
\end{tabular}
\caption{Trajectory Datasets}
\label{Table:Trajectory_Info}
\end{minipage}
\hfill
\begin{minipage}{0.48\linewidth}
\centering
\scriptsize
\begin{tabular}{|c|c|c|c|}
\hline
\textbf{Dataset} & $|\mathcal{B}|$ & $|\mathcal{BS}|$ & $Avg_{dist}$ \\ \hline
NYC & $716$ & $1031040$  & $15.07~km$ \\ \hline
LA & $1483$ & $2135520$ & $10.26~km$ \\ \hline
\end{tabular}
\caption{Billboard Datasets}
\label{Table:Billboard_Info}
\end{minipage}
\end{table}
\vspace{-0.3 in}
\subsection{Key Parameters.} 
All key parameters used in our experiments are summarized in Table \ref{Table-2}. The default parameter settings are highlighted in bold.
\begin{table}[h!]
\caption{\label{Table-2} Key Parameters}
\vspace{-0.15in}
\centering
\scriptsize  
\begin{tabular}{|c|p{6.2cm}|}
\hline
\textbf{Parameter} & \textbf{Values} \\ \hline
$\alpha$ & $40\%, 60\%, 80\%, 100\%, 120\%$ \\ \hline
$\beta$ & $1\%, 2\%, \textbf{5\%}, 10\%, 20\%$ \\ \hline
$|\mathcal{P}|$ & $100, 50, \textbf{20}, 10, 5$ \\ \hline
$\epsilon$ & $0.01, 0.05, \textbf{0.1}, 0.15, 0.2$ \\ \hline
$\lambda$ & $25\text{m}, 50\text{m}, \textbf{100\text{m}}, 125\text{m}, 150\text{m}$ \\ \hline
\end{tabular}
\end{table}

\paragraph{\textbf{Advertiser}}
Given an advertiser $\mathcal{A}$, with a set of products $\mathcal{P} = \{p_1, p_2,\ldots, p_n \}$, the advertiser can be represented as $(\mathcal{A}, p_{i}, \sigma_{i}, \mathcal{B})$, where $p_{i}$ is the $i^{th}$ product, $\sigma_{i}$ and $\mathcal{B}$ is the corresponding influence demand and budget. 
\paragraph{\textbf{Demand Supply Ratio.}}
The ratio $\alpha = \sigma^{\mathcal{P}} / \sigma^{*}$ represents the proportion of global influence demand $(\sigma^{\mathcal{P}} = \sum_{i=1}^{n} \sigma_i)$ to total influence supply $(\sigma^{*} = \sum_{b \in \mathcal{BS}} \mathcal{I}(b))$. We evaluate five values of $\theta: 40\%, 60\%, 80\%, 100\%, ~and~ 120\%$.
\paragraph{\textbf{Advertiser Individual Demand.}}
The ratio $\beta = \sigma^{\mathcal{P}''} / \sigma^{*}$ denotes the ratio of average individual demand to total influence supply, where $\sigma^{\mathcal{P}''} = \sigma^{\mathcal{P}} / |\mathcal{P}|$ is the average influence demand per product. This value $\beta$ adjusts the demand of individual products.
\paragraph{\textbf{Product Demand.}}
The demand for each product is generated as $\sigma = \lfloor \omega \cdot \sigma^{*} \cdot \beta \rfloor$, where $\omega$ is a random factor between 0.8 and 1.2 to simulate varying product payments.
\paragraph{\textbf{Billboard Cost.}}
The outdoor advertising companies like LAMAR and JCDecuax do not disclose the actual cost of renting a billboard slot. In existing studies \cite{ali2022influential,ali2023influential,ali2024effective,aslay2015viral,banerjee2020budgeted,zhang2019optimizing}, the cost of a billboard slot is proportional to its influence. Following this, we calculated the cost of a billboard slot `bs' using $\lfloor \delta \cdot \frac{\mathcal{
I}(bs)}{10} \rfloor$, where $\delta \in [0.8,1.1]$ and $\mathcal{I}(bs)$ denotes the influence of billboard slot `bs'.
\paragraph{\textbf{Advertiser Budget.}}
Following existing studies \cite{aslay2017revenue,aslay2015viral,zhang2021minimizing,ali2024minimizing,ali2024regret}, we set each product’s payment proportional to its influence demand: $\mathcal{L}_{i} = \lfloor \eta \cdot \sigma_{i} \rfloor$, where $\eta$ is a random factor in [0.9, 1.1] to model payment variations. The total advertiser budget is then $\mathcal{B} = \sum_{i=1}^{n} \mathcal{L}_{i}$.
\paragraph{\textbf{Environment Setup.}}
All Python code is executed on an HP $Z4$ workstation with $64$ GB RAM and a $3.50$ GHz Intel Xeon(R) CPU.

\subsection{Baseline Methodologies.} \label{Sec:Baseline}
\paragraph{\textbf{Random Allocation (RA)}}
In this approach, billboard slots are selected uniformly at random and assigned to products until both the influence demand and budget constraints are met.
\paragraph{\textbf{Top-k Allocation}}
In this approach, billboard slots are first sorted by their influence values. The highest-influence slots are allotted to the products sequentially until the influence requirements and budget constraints are met.

\subsection{\textbf{Goals of Our Experimentation}}
\begin{itemize}
\item \textbf{RQ1}: Varying $\alpha$, $\beta$, how do the satisfied product change?
\item \textbf{RQ2}: Varying $\alpha$, $\beta$, how do the total budget is utilized?
\item \textbf{RQ3}: Varying $\alpha$, $\beta$, how do the computational time change?
\item \textbf{RQ4}: Varying $\epsilon$ and $\lambda$, how do the influence quality change?
\end{itemize}

\subsection{Results and Discussions} \label{Sec:RD} 
\paragraph{\textbf{Varying $\alpha$ and $\beta$ Vs. Influence.}}
From Figure \ref{Fig:NYC_Combined}(a,b) and Figure \ref{Fig:LA_Combined}(a,b), it is clear that with the increase of demand supply ratio $(\alpha)$, within the budget of each product, the influence value increases. When the $\alpha$ value is $100\%$ or $120\%$, the influence demand is higher for each product and requires a larger number of slots to satisfy. In the case of the `RA' approach, in the NYC dataset, it satisfies the influence demand of $100$  products with the lowest cost compared to the `Random' and `Top-K' approaches. In the case of the `BCA' approach, the influence gain is higher compared to the baseline methods like `Random' and `Top-K'; however, it is less than the `RA' approach. This happens because in the `BCA' approach, we find a common set of slots that can satisfy all the products. The difference in influence is $20\%$ to $25\%$ more in the NYC dataset and $8\%$ to $10\%$ more than the baselines in the LA dataset. Now, in case of the LA dataset in Figure \ref{Fig:LA_Combined}, similar observations were found for all the proposed and baseline methods. In the experimental setup, when $\alpha \geq 100\%$ and $\beta \geq 10\%$, the influence demand is greater than the influence supply in those cases, and the influence demand of all the products is not satisfied. In these cases, `BCA' methods perform better than the baselines; however, the `RA' approach fails to provide a feasible solution.
\begin{figure*}[ht]
    \centering
    \setlength{\tabcolsep}{2pt} 
    \renewcommand{\arraystretch}{0.9} 
    \begin{tabular}{cccc}
        \includegraphics[width=0.23\linewidth]{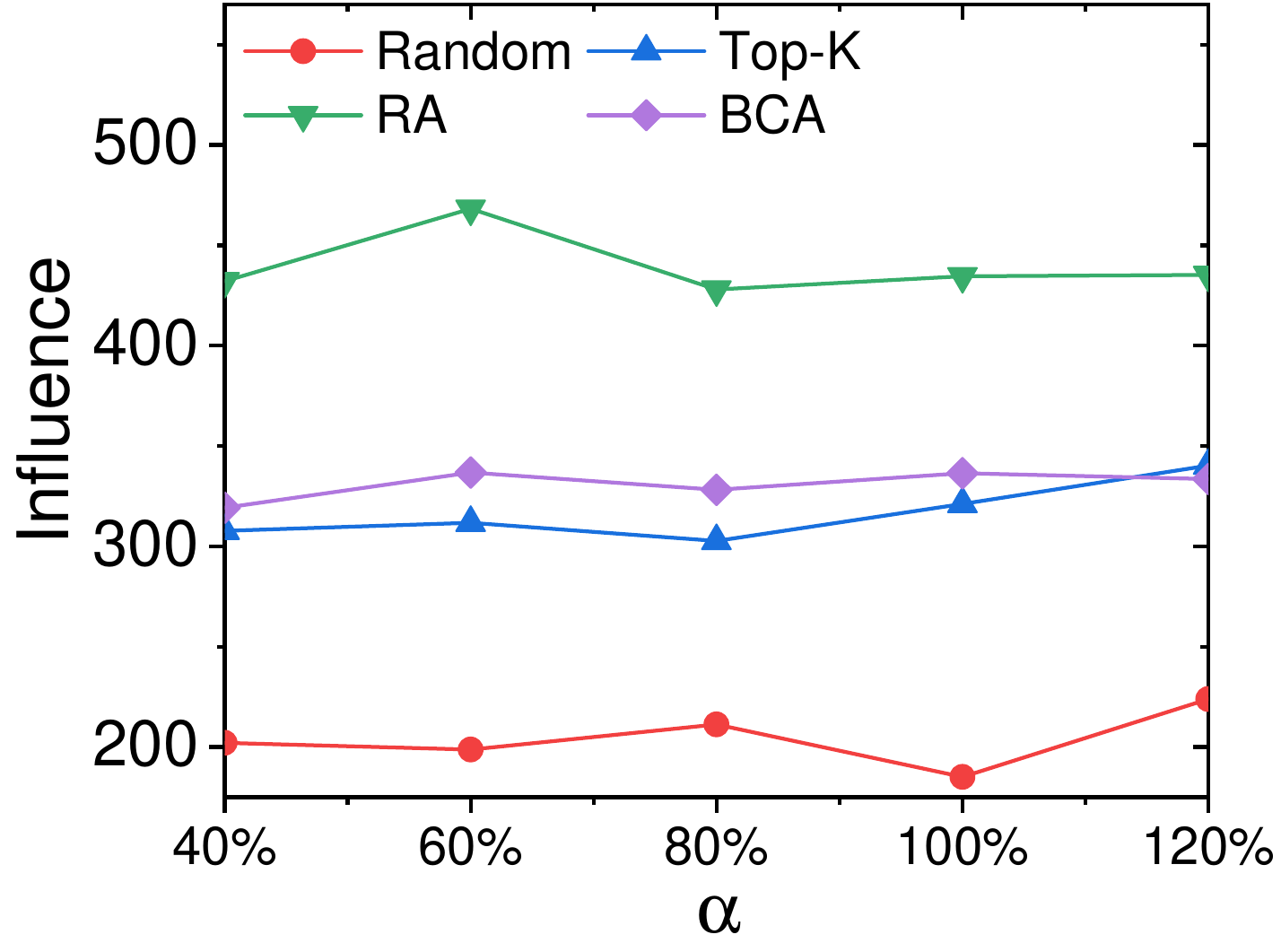} &
        \includegraphics[width=0.23\linewidth]{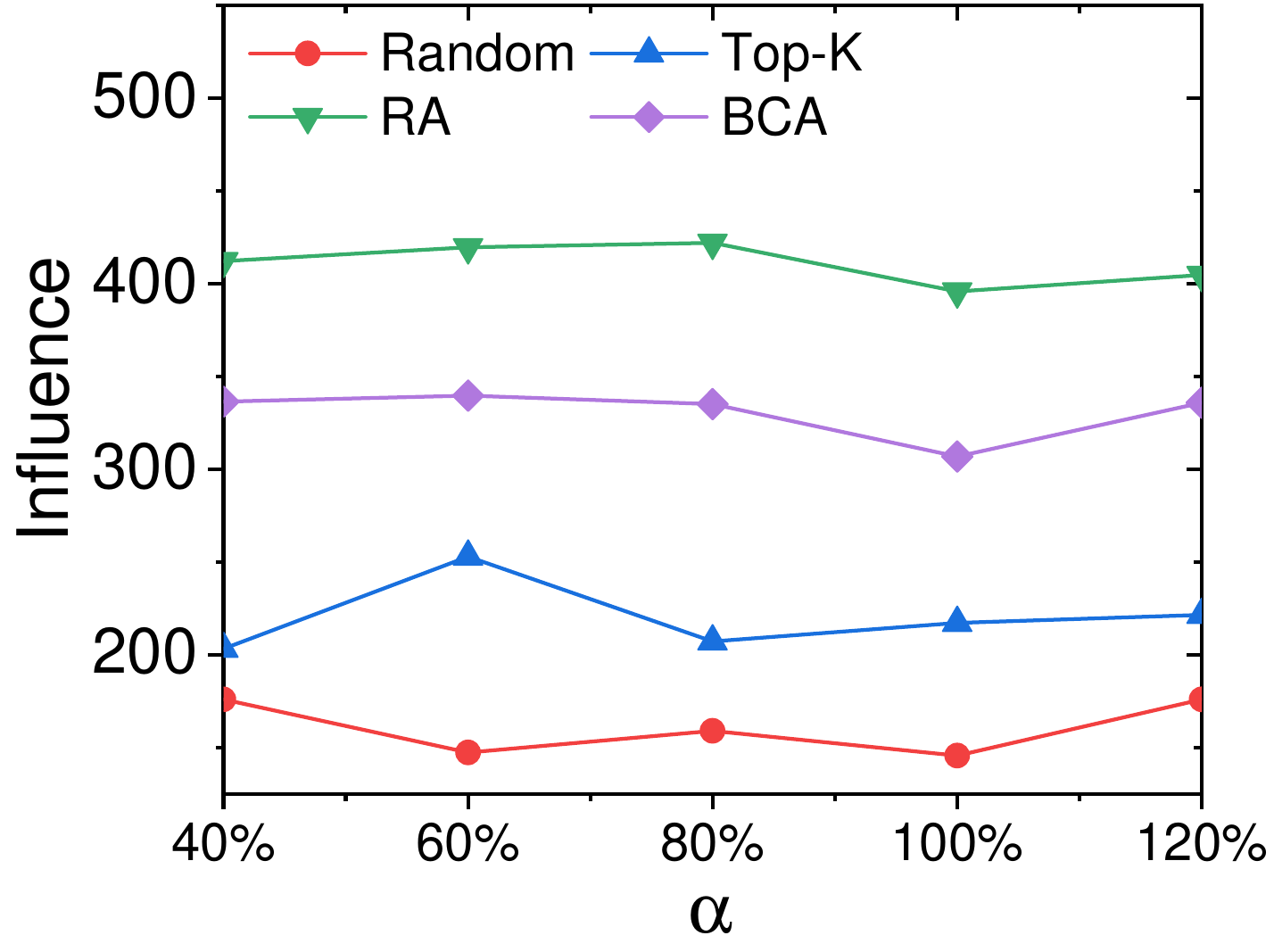} &
        \includegraphics[width=0.23\linewidth]{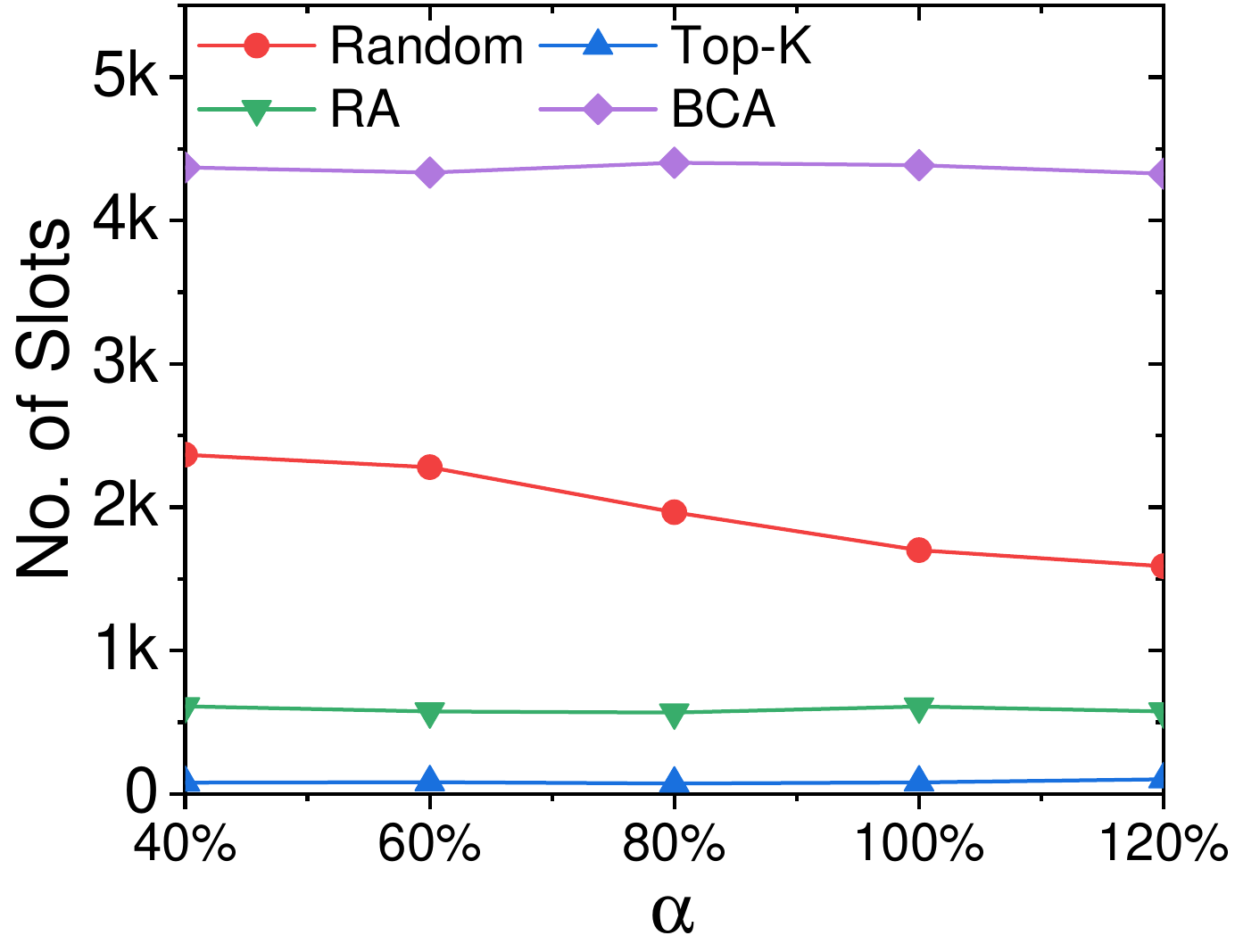} &
        \includegraphics[width=0.23\linewidth]{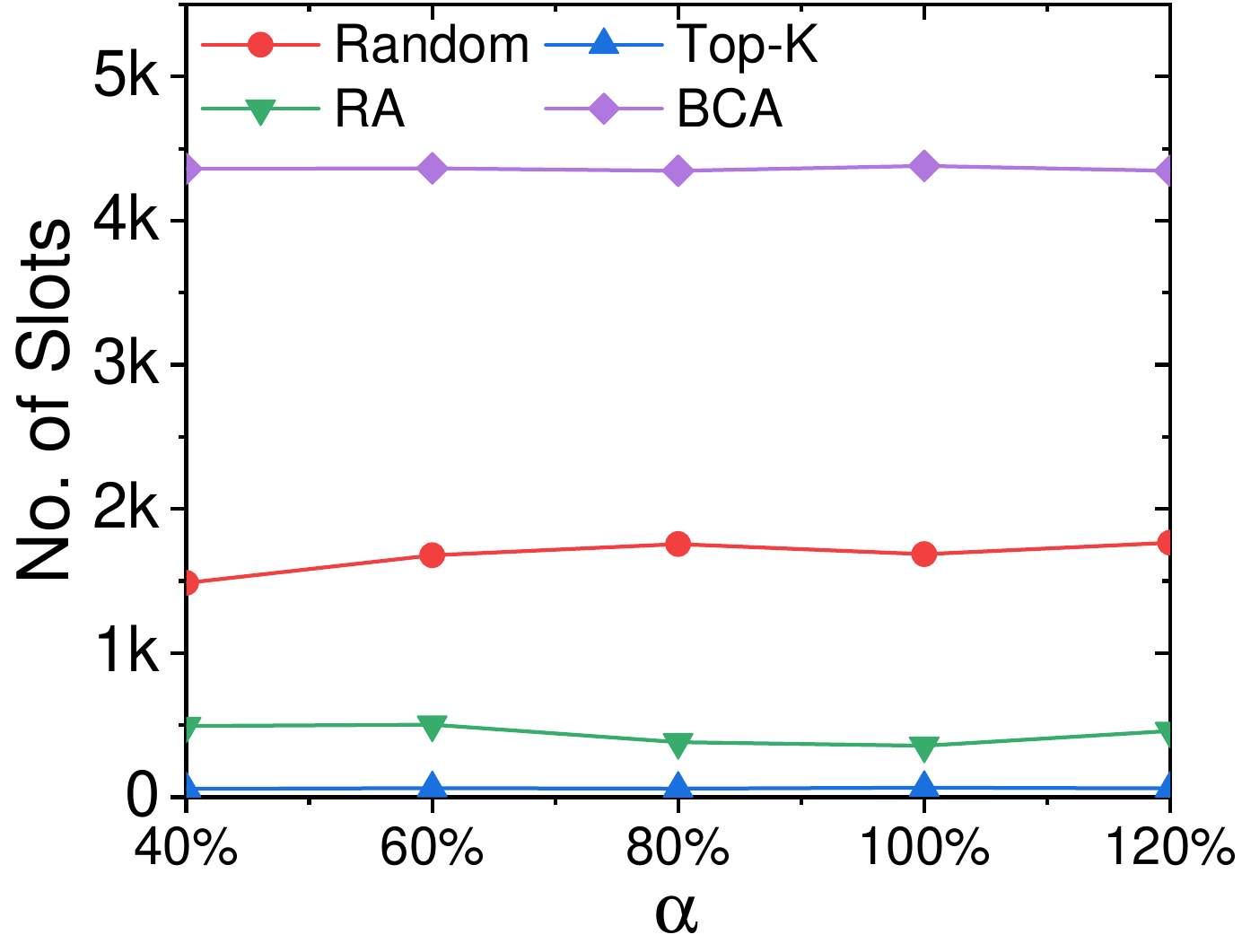} \\
        {\scriptsize (a) $\beta = 1\%,~ |\mathcal{P}| = 100$} &
        {\scriptsize (b) $\beta = 2\%,~ |\mathcal{P}| = 50$} &
        {\scriptsize (c) $\beta = 1\%,~ |\mathcal{P}| = 100$} &
        {\scriptsize (d) $\beta = 2\%,~ |\mathcal{P}| = 50$} \\
        \includegraphics[width=0.23\linewidth]{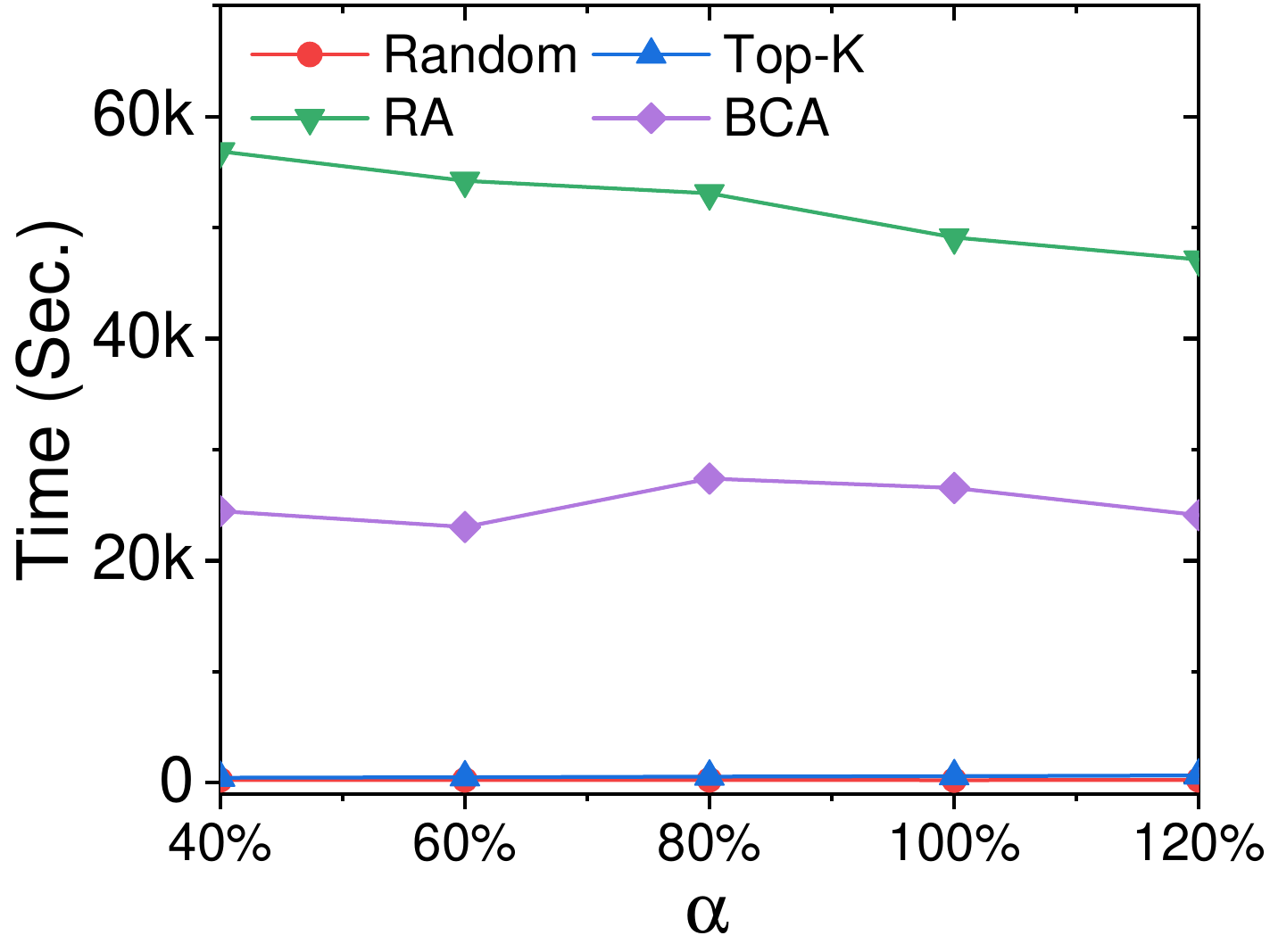} &
        \includegraphics[width=0.23\linewidth]{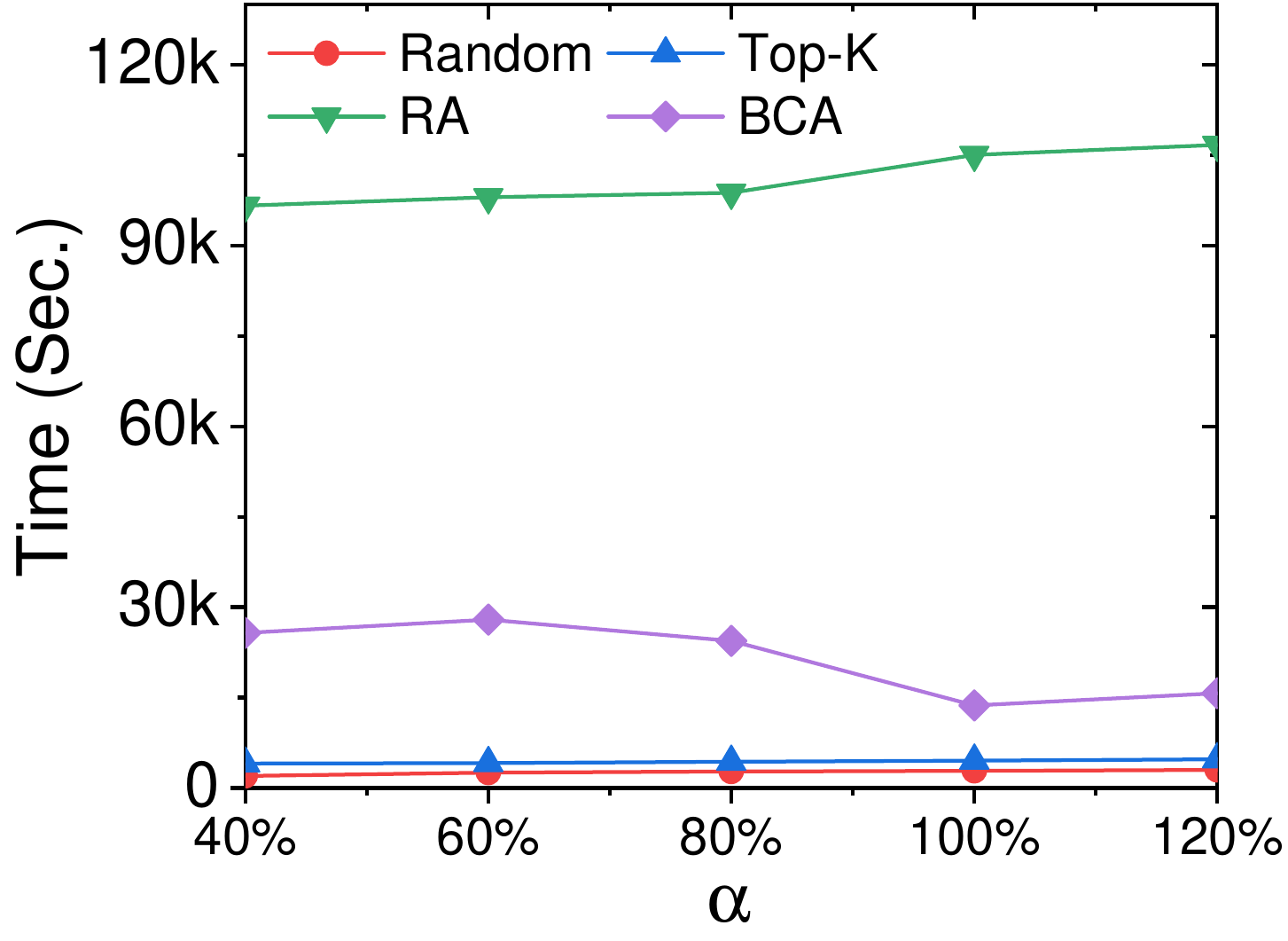} &
        \includegraphics[width=0.23\linewidth]{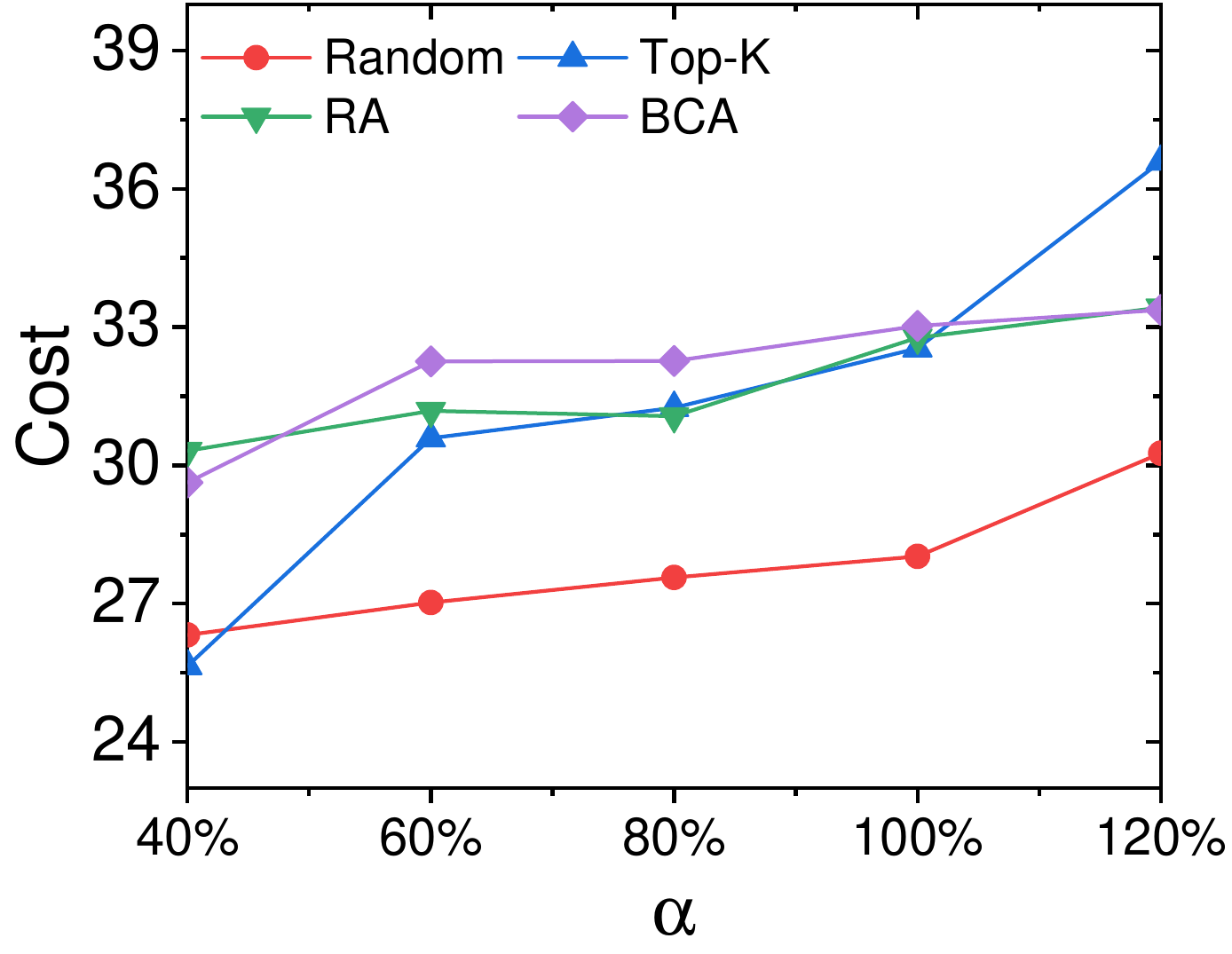} &
        \includegraphics[width=0.23\linewidth]{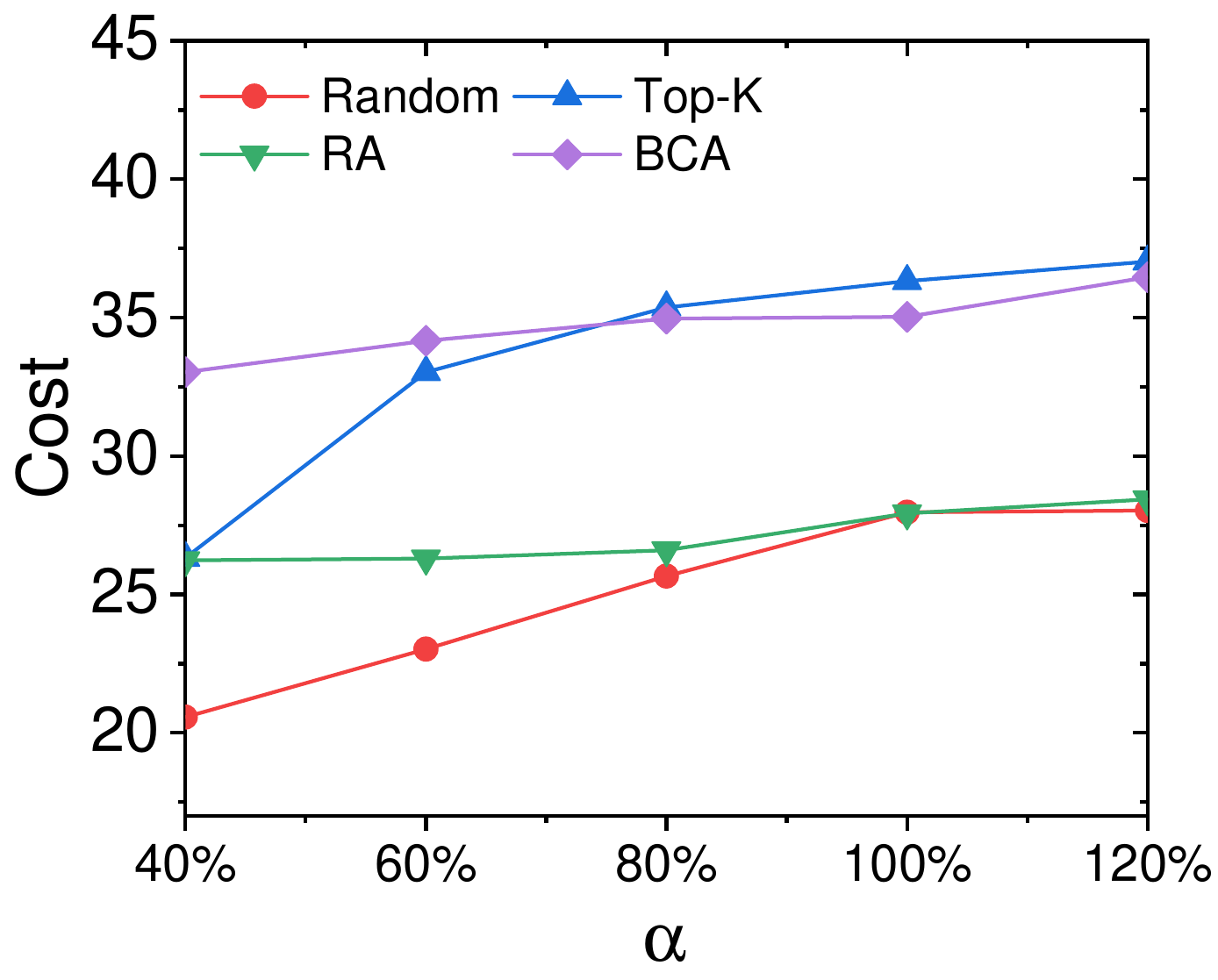} \\
        {\scriptsize (e) $\beta = 1\%,~ |\mathcal{P}| = 100$} &
        {\scriptsize (f) $\beta = 2\%,~ |\mathcal{P}| = 50$} &
        {\scriptsize (g) $\beta = 1\%,~ |\mathcal{P}| = 100$} &
        {\scriptsize (h) $\beta = 2\%,~ |\mathcal{P}| = 50$} \\
    \end{tabular}
    \caption{Plots for NYC dataset: $(a,b)$ $\alpha$ vs. Influence, $(c,d)$ $\alpha$ vs. No. of Slots, $(e,f)$ $\alpha$ vs. Time, and $(g,h)$ $\alpha$ vs. Cost}
    \label{Fig:NYC_Combined}
\end{figure*}

\paragraph{\textbf{Varying $\alpha$ and $\beta$ Vs. No. of Slots.}}
With an increase in $\alpha$ and $\beta$, the number of slots allocated for each product increases due to an increase in the demand for influence, as shown in Figure \ref{Fig:NYC_Combined}(c,d) and Figure \ref{Fig:LA_Combined} (c,d). Among the baseline approaches, the `Top-k' allocates fewer slots than the `Random' approach. As in the `Top-k', influential slots are stored in descending order and allocated to each product. The `BCA' approach allocates more slots than the `RA' and baseline methods, as it generates a single set of slots to satisfy all the product's demand. However, `RA' allocates a reasonable number of slots to satisfy the influence demand for all the advertisers.
\paragraph{\textbf{Efficiency Test.}}
From Figure \ref{Fig:NYC_Combined}(e,f) and Figure \ref{Fig:LA_Combined}(e,f), we have three main observations. First, the running time of `RA' is longer than that of the `BCA' and baseline methods because `RA' considers a large number of permutations of slots and products. Second, with increasing values of $\alpha$ and $\beta$, the computational time for all methods increases. Among the baseline approaches, `Top-k' takes more time compared to the `Random' approach. Third, the `BCA' approach takes more time compared to the baselines; however, it is less than the `RA' approach. With an increase of $\alpha$ and $\beta$, the computation time change is less because the initially selected slots using continuous greedy are significant, such that they satisfy most of the product's influence demand. So, further allocation using greedy based on marginal gain requires less time to compute.

\paragraph{\textbf{Product Vs. Budget.}}
From Figure \ref{Fig:NYC_Combined}(g,h) and Figure \ref{Fig:LA_Combined}(g,h), we have two observations. First, with increasing demand supply ratio $(\alpha)$ and average individual demand $(\beta)$, the influence of demand for each product increases, and a larger number of slots are allocated. Hence, the allocation cost of products increases. Second, in terms of cost, `RA' outperforms `Random', `Top-k', and `BCA' methods while providing maximum influence. `RA' uses the minimum cost to satisfy the influence demand for all the products.  

\begin{figure*}[ht]
    \centering
    \setlength{\tabcolsep}{2pt} 
    \renewcommand{\arraystretch}{0.9} 
    \begin{tabular}{cccc}
        \includegraphics[width=0.23\linewidth]{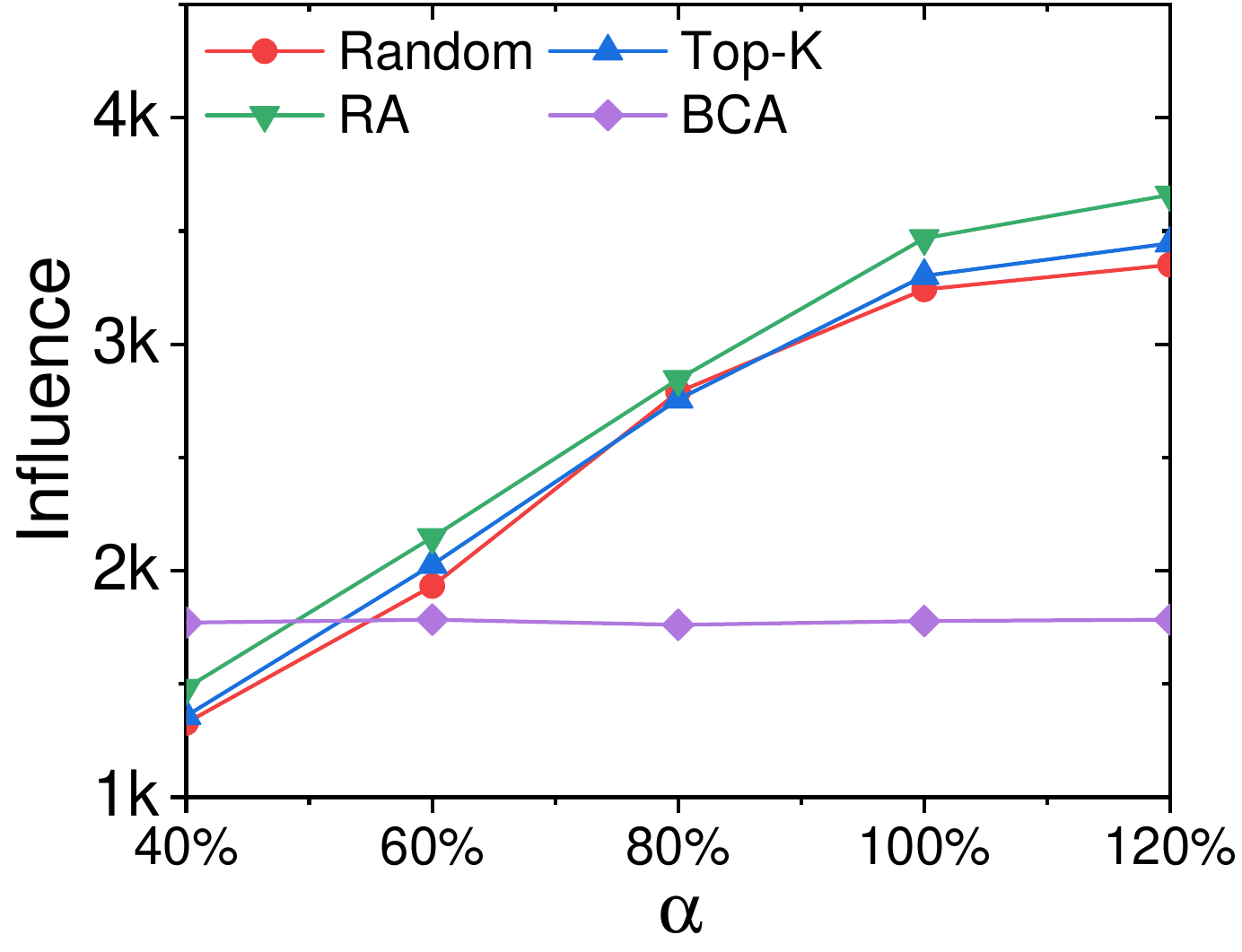} &
        \includegraphics[width=0.23\linewidth]{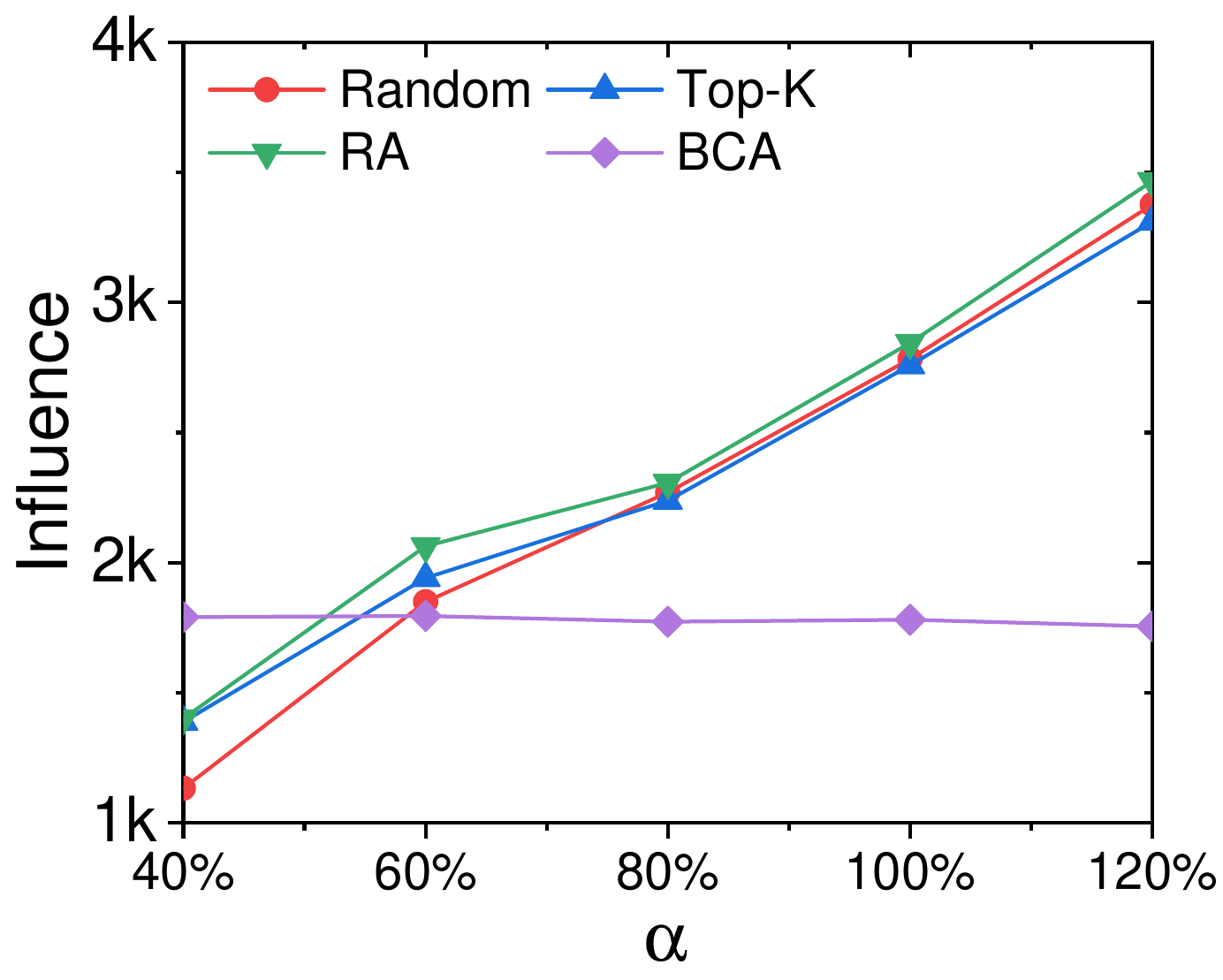} &
        \includegraphics[width=0.23\linewidth]{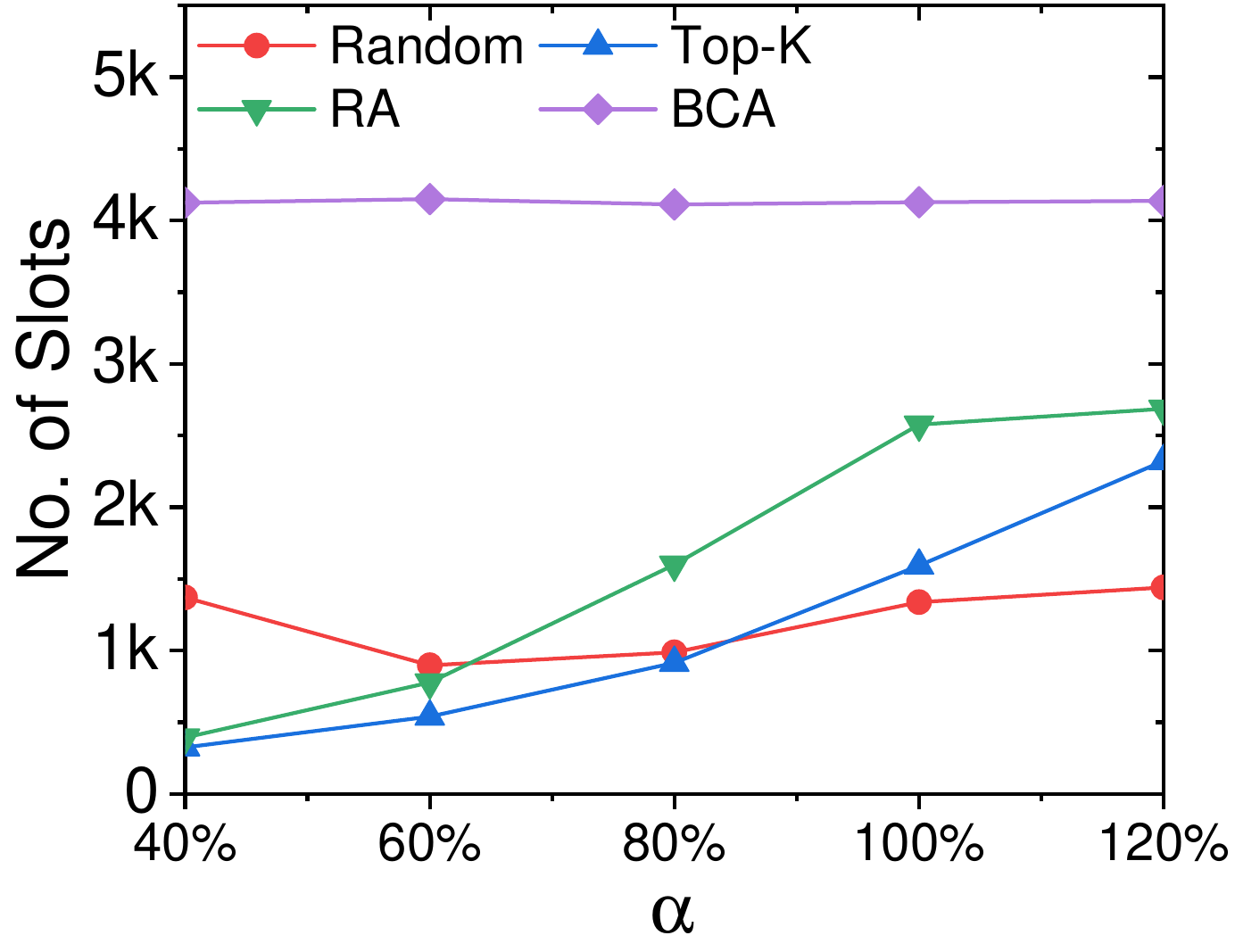} &
        \includegraphics[width=0.23\linewidth]{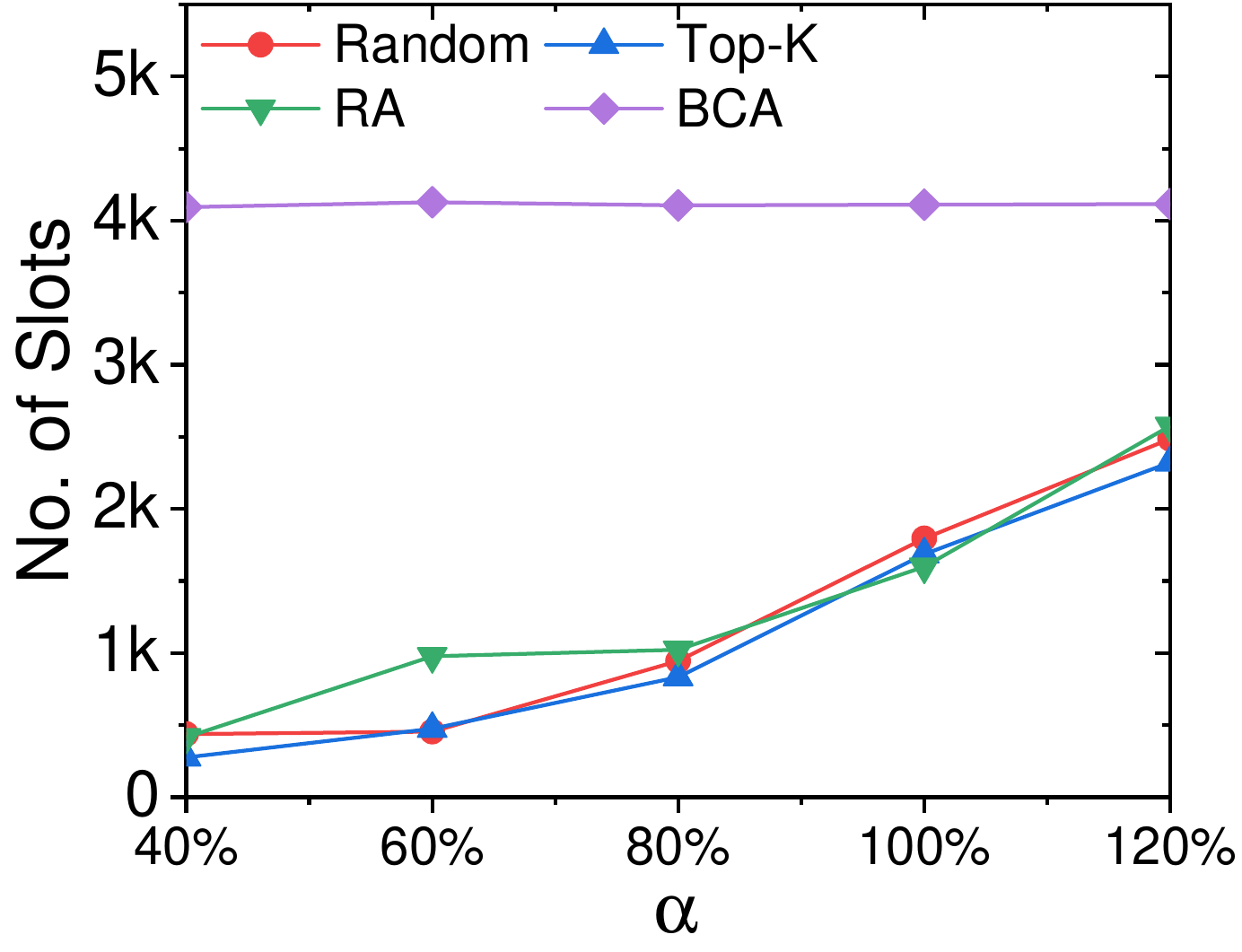} \\
        {\scriptsize (a) $\beta = 1\%,~ |\mathcal{P}| = 100$} &
        {\scriptsize (b) $\beta = 2\%,~ |\mathcal{P}| = 50$} &
        {\scriptsize (c) $\beta = 1\%,~ |\mathcal{P}| = 100$} &
        {\scriptsize (d) $\beta = 2\%,~ |\mathcal{P}| = 50$} \\
        \includegraphics[width=0.23\linewidth]{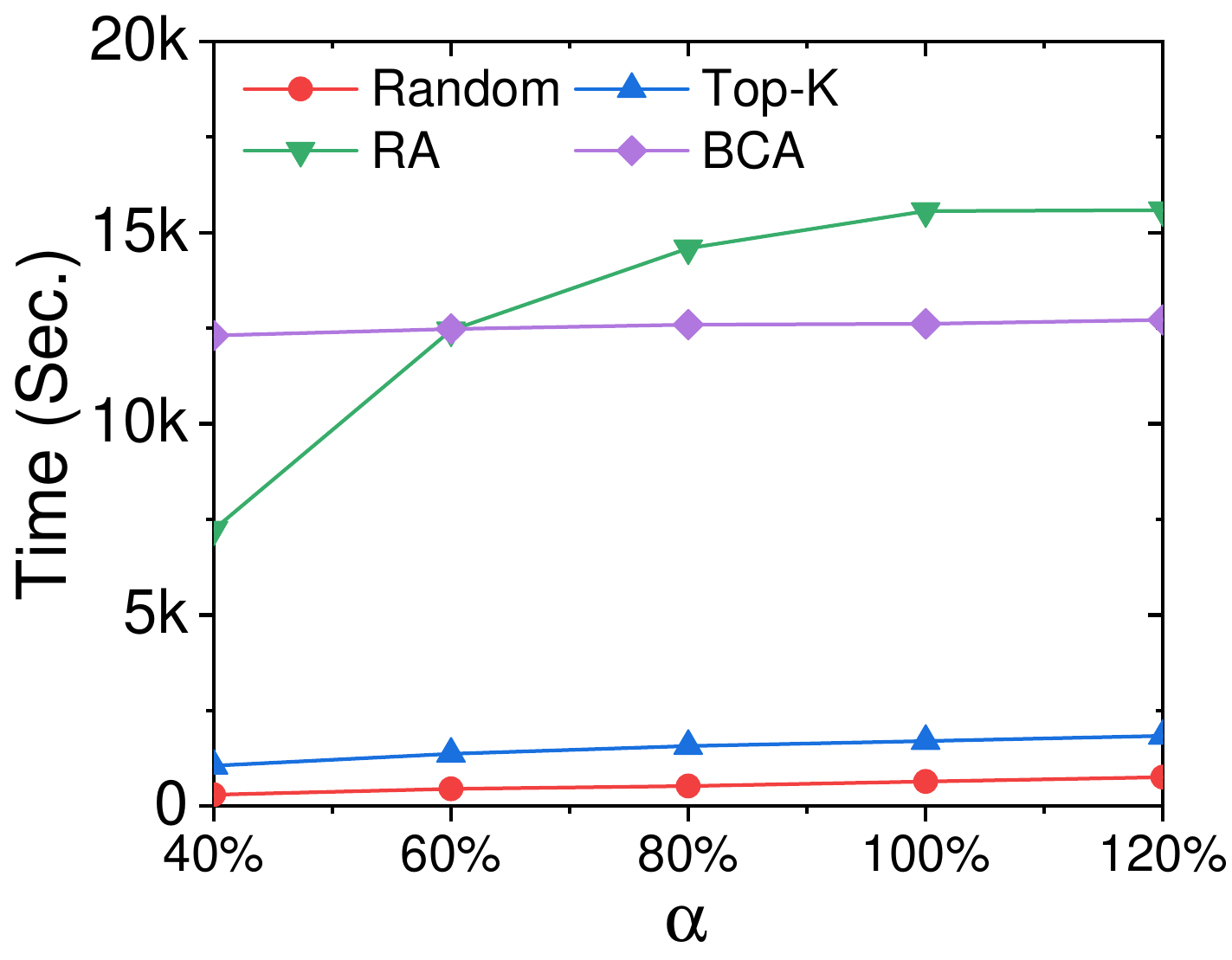} &
        \includegraphics[width=0.23\linewidth]{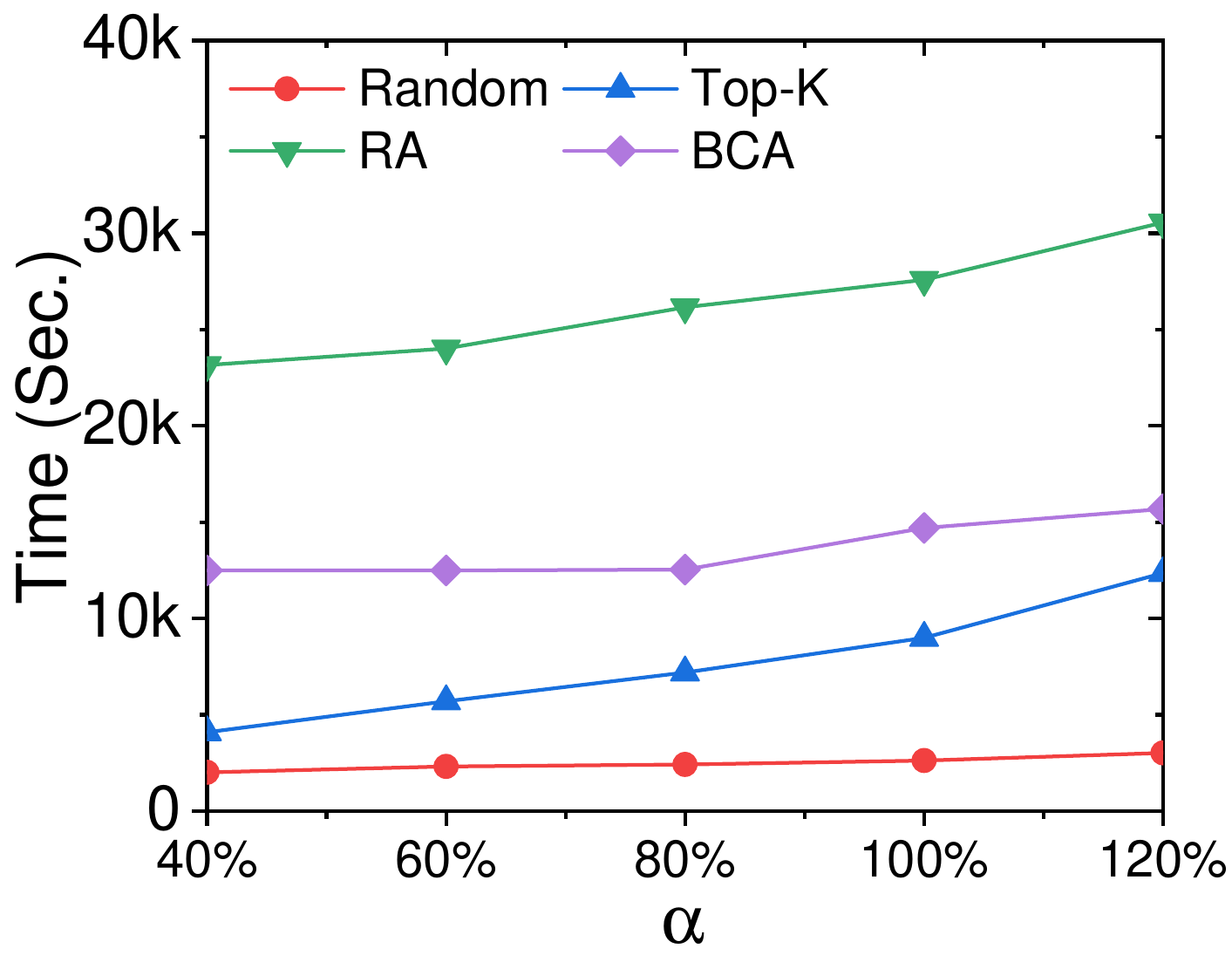} &
        \includegraphics[width=0.23\linewidth]{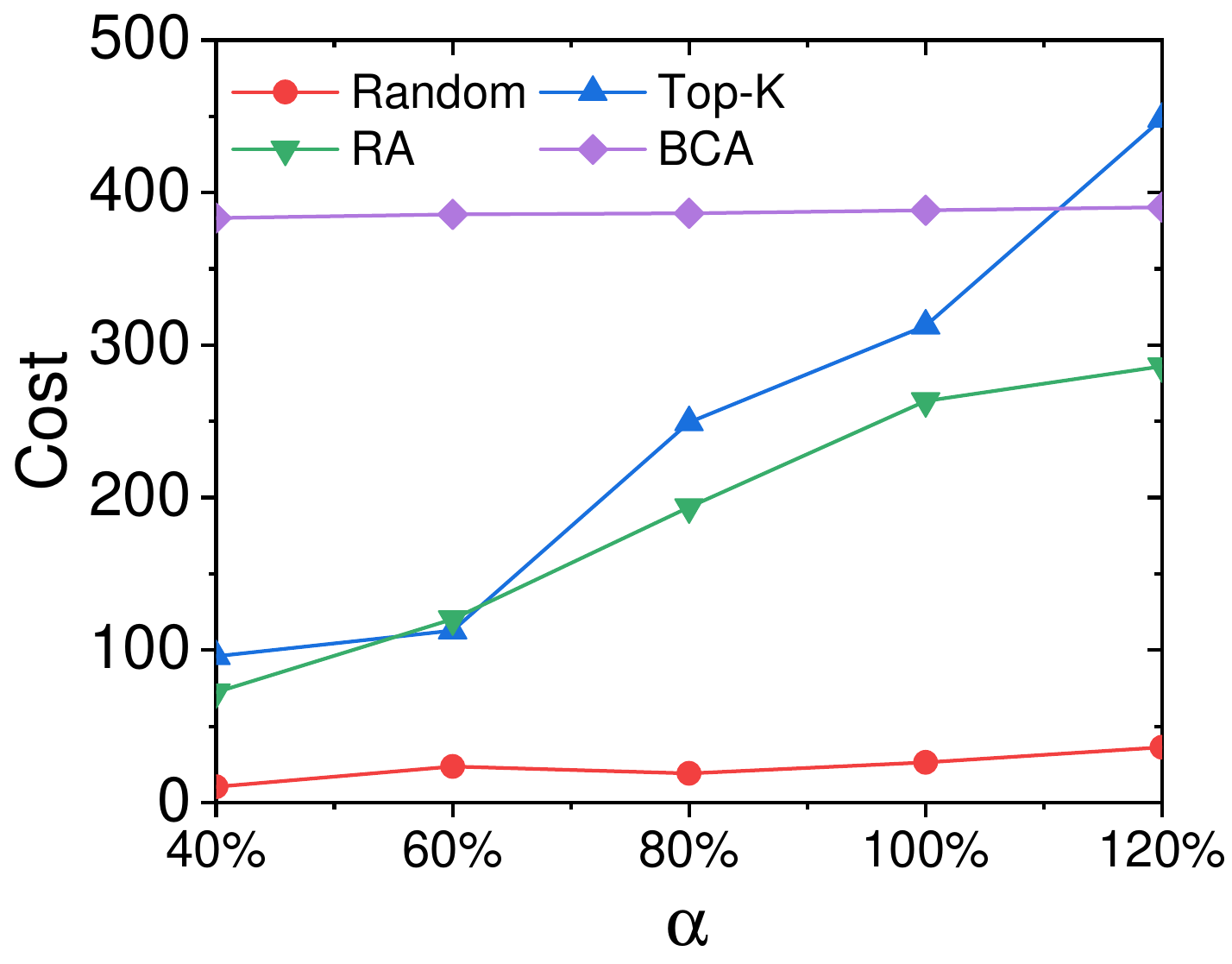} &
        \includegraphics[width=0.23\linewidth]{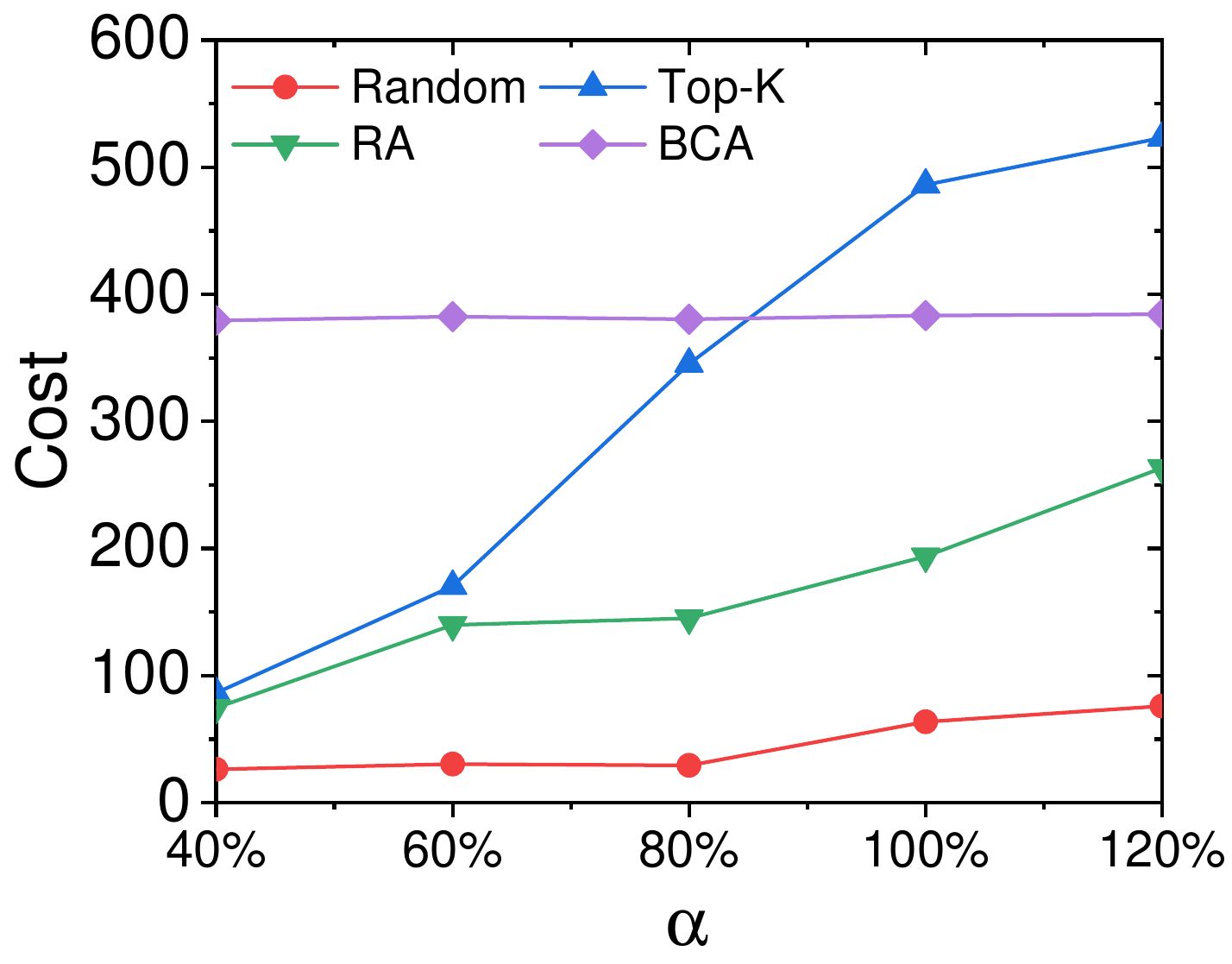} \\
        {\scriptsize (e) $\beta = 1\%,~ |\mathcal{P}| = 100$} &
        {\scriptsize (f) $\beta = 2\%,~ |\mathcal{P}| = 50$} &
        {\scriptsize (g) $\beta = 1\%,~ |\mathcal{P}| = 100$} &
        {\scriptsize (h) $\beta = 2\%,~ |\mathcal{P}| = 50$} \\
    \end{tabular}
    \caption{Plots for LA dataset: $(a,b)$ $\alpha$ vs. Influence, $(c,d)$ $\alpha$ vs. No. of Slots, $(e,f)$ $\alpha$ vs. Time, and $(g,h)$ $\alpha$ vs. Cost}
    \label{Fig:LA_Combined}
\end{figure*}

\paragraph{\textbf{Varying $\alpha$, $\beta$ Vs. Products.}}
We have experimented with different values of $\alpha$, $\beta$, and the number of products to show that different cases occur due to the varying demand of the products. Although we have reported the experimental results of two settings: $\alpha = 40\%, 60\%, 80\%, 100\%, 120\%$ and $\beta = 1\%, 2\%$, and the number of products $(|\ell|)$ is $100$ and $50$. However, the other cases are (1) when varying $\alpha$, $\beta = 5\%$, and  $|\ell| = 20$ (2) when varying $\alpha$, $\beta = 10\%$, and  $|\ell| = 10$ (3) when varying $\alpha$, $\beta = 20\%$, and  $|\ell| = 5$. In all these three cases, some of the product's influence demand is unsatisfied, and the `RA' methods will not provide any feasible solution due to a higher demand than supply. However, the `BCA', `Top-k', and `Random' provide solutions where some products are unsatisfied, and among them, `BCA' outperforms baseline methods.

\paragraph{\textbf{Additional Discussion.}}
The additional parameters used in our experiments are the distance parameter $\lambda$ and the approximation parameter $\epsilon$. The $\lambda$ determines the distance from which a billboard slot can influence a number of trajectories. The $\epsilon$ controls the accuracy of approximation. The smaller $\epsilon \in (0,1)$ provides a better approximation, but increases runtime. In our experiments, we set $\epsilon = 0.1$ and $\lambda = 100m$ as the default setting. Due to space limitations, we have only reported the results for the default settings. 

\section{Concluding Remarks} \label{Sec:CFD}
In this paper, we have studied the Multi-Product Influence Maximization Problem, where the goal is to maximize the influence for multiple products. As different trajectory user has affinity towards different products, it is a challenging problem. In this paper, we study this problem in two variants. In the first case, we select a set of $k$ slots for maximizing the aggregated influence, whereas in the second case, we are interested in selecting a non-intersecting set of slots depending upon the given budget. We modeled this problem as a submodular multi-cover problem and its generalized version. We have adopted the bi-criteria approximation algorithm for solving the first variant, and for the second variant, we have proposed a sampling-based randomized algorithm. The experimental evaluation with real-world datasets has demonstrated the superior performance of the proposed solution approaches.  
\bibliographystyle{ACM-Reference-Format}
\bibliography{sample-bibliography} 
\end{document}